\documentclass[letter, 11pt]{article}
\usepackage{amsmath} 
\usepackage{amssymb}  
\usepackage{amsthm}
\usepackage{graphicx}
\usepackage{times}
\usepackage{setspace} 
\usepackage{pifont}
\usepackage{multirow}
\usepackage{listings}
\textheight=9in
\textwidth=6.5in
\headheight=0mm \headsep=0mm
\topmargin=0mm 
\oddsidemargin=0mm \evensidemargin=0mm
\pagestyle{plain} 
\newcommand{\cmark}{\ding{51}}%
\newcommand{\xmark}{\ding{55}}%

\newcommand{\bhline}[1]{\noalign{\hrule height #1}}

\newcommand{\defeq}{\stackrel{\mbox{\scriptsize{\normalfont\rmfamily def. }}}{=}}



\newcommand{\Order}{\mathrm{O}}
\newcommand{\Pro}{\mathbf{Pr}}
\newcommand{\E}{\mathbf{E}}

\newcommand{\Disc}{\mathrm{Disc}}
\newcommand{\difm}{\Delta}

\newcommand{\trM}{P} 

\newcommand{\VS}{V} 
\newcommand{\trE}{E^P}
\newcommand{\trN}{N^P} 
\newcommand{\trd}{d^P} 
\newcommand{\trML}{P_{\mathrm{L}}} 
\newcommand{\trMM}{P_{\mathrm{M}}} 
\newcommand{\sE}{\lambda_2} 
\newcommand{\conf}{\mathcal{X}} 
\newcommand{\cconf}{X} 
\newcommand{\Y}{\mathcal{Y}} 
\newcommand{\rvD}{\mathcal{D}} 
\newcommand{\rvr}{r} 

\newcommand{\rvDs}{\mathrm{D}} 
%
\newcommand{\IP}{\widetilde{P}} 

\newcommand{\ft}{\tau} 
\newcommand{\fv}{\nu} 
\newcommand{\fk}{\kappa} 

\def\df<#1>{\mathbf{1}\{#1\}}
\def\defeq{\mathrel{\mathop:}=}
\newcommand{\Not}{K} 
\newcommand{\tT}{T}
\newcommand{\Nov}{N}

\newcommand{\Tc}{T} 


\newtheorem{theorem}{Theorem}[section]
\newtheorem{lemma}[theorem]{Lemma}
\newtheorem{corollary}[theorem]{Corollary}
\newtheorem{proposition}[theorem]{Proposition}
\newtheorem{observation}[theorem]{Observation}

\newtheorem{definition}[theorem]{Definition}

\title{Discrepancy Analysis of a New Randomized Diffusion Algorithm}
\author{
 Takeharu Shiraga\footnote{
    Department of Information and System Engineering, Faculty of Science and Engineering, 
   Chuo University, Tokyo, Japan\protect \\ 
  {\ttfamily shiraga@ise.chuo-u.ac.jp}} }\if0 \and 
 Yukiko Yamauchi\footnotemark[1] \and 
 Shuji Kijima\footnotemark[1] \and
 Masafumi Yamashita\footnotemark[1]
}\fi
\begin{document}
\makeatletter
\makeatother
\maketitle
\begin{abstract}
For an arbitrary initial configuration of discrete loads over vertices of a distributed graph, we consider the problem of minimizing the {\em discrepancy} between the maximum and minimum loads among all vertices. 
For this problem, this paper is concerned with the ability of natural diffusion-based iterative algorithms: at each discrete and synchronous time step on an algorithm, each vertex is allowed to distribute its loads to each neighbor (including itself) without occurring negative loads or using the information of previous time steps.

In this setting, this paper presents a new {\em randomized} diffusion algorithm like multiple random walks. 
Our algorithm archives $\Order(\sqrt{d \log \Nov})$ discrepancy for any $d$-regular graph with $\Nov$ vertices with high probability, while {\em deterministic} diffusion algorithms have $\Omega(d)$ lower bound.
Furthermore, we succeed in generalizing our algorithm to any symmetric round matrix.
This yields that $\Order(\sqrt{ d_{\max} \log \Nov})$ discrepancy for arbitrary graphs without using the information of maximum degree $d_{\max}$. 

\ 

{\bf Key words}: 
load balancing algorithms, diffusion, Markov chains
\end{abstract}

\lstset{
	language=C,
	escapechar=\@,
	tabsize=2,
	backgroundcolor={\color[rgb]{1,1,1}},
	frame=tb,
	keywords={if, else, for}
}

\section{Introduction}
	This paper is concerned with the load balancing problem on distributed networks.
	Let $G=(\VS,E)$ be a connected graph with $\Nov=|\VS|$ vertices, and let $\conf^{(0)}_v$ denote a initial amount of loads on each $v\in V$.
	Then, from an arbitrary initial configuration of loads $\conf^{(0)}$, we consider iterative algorithms 
	which update the configuration of loads from $\conf^{(t)}$ to $\conf^{(t+1)}$ on each time step $t$
	with a goal to minimize the {\em discrepancy} between the maximum and minimum loads among all vertices as well as possible.
	Especially, this paper focuses on the {\em diffusion} algorithms: 
	in an update, each vertex distributes its loads to each neighbor synchronously.
	Because of not only its locality and simplicity, 
	but also deep connection with the theory of multiple random walks and mixing time of Markov chains, diffusion algorithms have been well studied recently.
	%
	\subsection{Previous works}
		%
		\noindent \textbf{Continuous loads: }
		If the loads are {\em continuous} ($\conf^{(t)}_v\in \mathbb{R}$), i.e. divisible,  
		techniques to estimate the discrepancy corresponding to the theory of Markov chains have been well studied~(See e.g. \cite{SS94}).
		For example, on $d$-regular graphs, a natural diffusion algorithm 
		such that each vertex $v$ sends $\conf^{(t)}_v/(d+1)$ loads to each neighbor and keeps the same amount of loads for itself
		achieves a constant discrepancy after appropriate time steps. 
		Strictly speaking, for any vector $\xi\in \mathbb{R}^\Nov$, let
		\begin{eqnarray}
		\label{def:disc}
		\Disc(\xi)&\defeq&\max_{v,u\in V}|\xi_v-\xi_u|,
		\end{eqnarray}
		which represents the {\em discrepancy} between the maximum and minimum values of $\xi$ among all vertices.
		Then, 
		$\Disc(\conf^{(\Tc)})=\Order(1)$ after $\Tc\defeq \Order(\log(\Disc(\conf^{(0)})\Nov)/(1-\sE))$ steps, 
		where 
		$\sE$ is the second largest eigenvalue of the transition matrix of the graph. The {\em convergence time} $\Tc$ is highly related to the {\em mixing time} of Markov chains (we will give precise discussions in Section~\ref{sec:continuous}).

		\noindent \textbf{Discrete loads: }
		On the other hand, if the loads are {\em discrete} ($\conf^{(t)}_v\in \mathbb{Z}$), i.e. indivisible, 
		it is not easy to estimate the discrepancy compared with the continuous case although the analytic techniques are also related to the theory of Markov chains.
		Considering the discrete loads is important as practical settings and there are many previous works.


		For example, it is easy to consider a natural discretization diffusion algorithm such that 
		each vertex $v$ partitions $\conf^{(t)}_v\in \mathbb{Z}$ into $d+1$ integers 
		$\left \lceil \frac{\conf^{(t)}_v}{d+1}\right \rceil, \ldots, \left \lceil \frac{\conf^{(t)}_v}{d+1}\right \rceil, 
		\left \lfloor \frac{\conf^{(t)}_v}{d+1} \right \rfloor, \ldots, \left \lfloor \frac{\conf^{(t)}_v}{d+1} \right \rfloor$, 
		and sends each of them to each neighbor (including itself). 
		Rabani~et~al.~\cite{RSW98} gave a framework of the analysis to deal with the discrepancy of discrete load balancing algorithms 
		including this {\sc Send}$(\lceil \conf^{(t)}_v/(d+1) \rceil\ {\rm or}\ \lfloor \conf^{(t)}_v/(d+1) \rfloor)$ algorithm.
		From their result, it was shown that $\Disc(\conf^{(\Tc)})=\Order(d \log \Nov /(1-\sE))$ 
		for any $d$ regular graph.
		Almost same but slightly refined upper bound on the discrepancy was given later by Shiraga et al.~\cite{SYKY13}. 
		They showed $\Disc(\conf^{(\Tc)})=\Order(d t_{\rm mix})$, 
		where $t_{{\rm mix}}$ is a mixing time of the graph. 
		It is well known that $t_{\rm mix}=\Order(\log \Nov/(1-\sE))$. 

		Berenbrink~et~al.~\cite{BCFFS15} studied a {\em randomized} diffusion algorithm to get smaller discrepancy with high probability, 
		i.e. with probability larger than $1-1/\Nov^c$ for some constant $c$. 
		In their algorithm, each vertex $v$ sends $\lfloor \conf^{(t)}_v/(d+1) \rfloor$ loads to each neighbor (including itself) firstly.
		Then the remaining loads are randomly sent one by one without replacing to neighbors (including itself).
		For this {\sc RSend}$(\lceil \conf^{(t)}_v/(d+1) \rceil\ {\rm or}\ \lfloor \conf^{(t)}_v/(d+1) \rfloor)$ algorithm on $d$ regular graphs, 
		they showed that $\Disc(\conf^{(T)})$ is bounded by $\Order\left(\left(d+\sqrt{\frac{d\log d}{1-\sE}}\right)\sqrt{\log \Nov}\right)$
		and $\Order((d\log \log \Nov)/(1-\sE))$ with high probability.
		Later, Sauerwald and Sun~\cite{SS14} gave an $\Order(d^2\sqrt{\log \Nov})$ discrepancy. 
		Note that this bound is independent to the expansion of graphs, i.e. independent of the second largest eigenvalue $\sE$.
		As an other randomized diffusion algorithm, Akbari and Berenbrink~\cite{AB13} dealt with a randomized ordering version of so called rotor-router model, 
		and showed that $\Disc(\conf^{(\Tc)})=\Order((d\log \log \Nov)/(1-\sE))$.
		
		A recent progress on diffusion algorithms was given by Berenbrink~et~al.~\cite{BKKMU15}. 
		They gave a strong framework of the deterministic diffusion algorithms and analyzed the discrepancy.
		For example, for a proposed algorithm such that each vertex $v$ sends $\lfloor \conf^{(t)}_v/2d \rfloor$ loads to each neighbor and keeps the remaining loads for itself, 
		they showed that $\Disc(\conf^{(\Tc)})=\Order(d\sqrt{\log \Nov/(1-\sE)})$ on $d$ regular graphs.
		It means that this {\sc Send}$(\lfloor \conf^{(t)}_v/2d \rfloor)$ algorithm improves 
		the upper bound of the discrepancy of the {\sc Send}$(\lceil \conf^{(t)}_v/(d+1) \rceil\ {\rm or}\ \lfloor \conf^{(t)}_v/(d+1) \rfloor)$.
		In a sense, {\sc Send}$(\lfloor \conf^{(t)}_v/2d \rfloor)$ is a kind of {\em lazy} version of 
		{\sc Send}$(\lceil \conf^{(t)}_v/(d+1) \rceil\ {\rm or}\ \lfloor \conf^{(t)}_v/(d+1) \rfloor)$.
		Laziness is a famous property used in the field of random walks (lazy random walk stays current vertex with probability larger than $1/2$). 
		Furthermore, they gave an algorithm which achieves $\Order(d)$ the discrepancy within $\Order(\Tc+(\log \Nov)/(1-\sE))$ step.
		In this algorithm, each vertex $v$ sends $[\conf^{(t)}_v/3d]$ loads to each neighbor and keeps the remaining loads itself, where $[\cdot]$ denotes rounding to the nearest integer.
		They also showed that the lower bound of the discrepancy is $\Omega(d)$ for the deterministic (stateless) discrete diffusion algorithms, 
		hence the  {\sc Send}$([\conf^{(t)}_v/3d])$  gives a tight upper bound. 
		%
		
		\noindent \textbf{On arbitrary graphs: }
		Since real computer networks often have the scale free property nowadays, the demand of studying load balancing algorithms on irregular graph is increasing. 

		For the continuous case, it is not too difficult to discuss the discrepancy
		since the theory of {\em symmetric} (or {\em reversible}) Markov chains has been established in the framework including irregular graphs.
		Strictly speaking, let $P\in [0,1]^{V\times V}$ be a transition matrix (round matrix) on $V$. 
		Then, for the algorithm such that each vertex $v$ sends $\conf^{(t)}_vP_{v,u}$ loads to a neighbor $u$, 
		the discrepancy converges to a constant within $\Order(\Tc)$ step if $P$ is symmetric. 
		For example, on an arbitrary graph, the algorithm such that each vertex $v$ sends $\conf^{(t)}_v/d_{\rm max}$ loads to each neighbor and keeps the remaining loads itself 
		achieves this property, where $d_{\rm max}$ is the maximum degree of the graph.
		If one would rather not use $d_{\max}$ since this is a global variable, 
		then the algorithm with $P_{v,u}=1/\min\{d_v,d_u\}$ for any $\{v,u\} \in E$ (and $P_{v,v}=1-\sum_{u:\{v,u\}\in E}P_{v,u}$)
		also converges to a constant discrepancy since $P$ is symmetric.
		This chain is called {\em Metropolis chain}. Note that this algorithm only require each vertex the knowledge of the degree of each neighbor (cf. \cite{NOSY10}). 

		On the other hand, for the discrete case, it is difficult to analyze on arbitrary graphs 
		since it is not clear that if one can generalize the analytic techniques for the discrete diffusion of regular graphs or not.
		For the algorithm such that each vertex $v$ sends $\lceil \conf^{(t)}_vP_{v,u}\rceil$ or $\lfloor \conf^{(t)}_vP_{v,u} \rceil$ loads to a neighbor $u$, 
		the analysis of Rabani et. al~\cite{RSW98} showed that $\Disc(\conf^{(T)})=\Order\left(\Psi_1(P)\right)$, 
		where $\Psi_1(P)$ is called the {\em local-1 divergence} (See \eqref{def:local2} for the precise definition). 
		They also showed that $\Psi_1(P)=\Order(d_{\rm max}\log \Nov/(1-\sE))$ for any symmetric $P$.
		Shiraga et al.~\cite{SYKY13} showed that $\Disc(\conf^{(T)})=\Order\left(d_{\rm max} t_{\rm mix}\right)$ for this algorithm.

	\subsection{Related works}
		\noindent \textbf{Algorithms with the state: }
		Note that all above algorithms are {\em stateless}, i.e. each vertex does not use any information of previous time steps, 
		while some previous works concerned with diffusion algorithms with the state.
		The {\sc Rotor-router model} is a typical one, which is a well studied deterministic process analogous to random walks. 
		In this algorithm, each vertex sends loads one by one to neighboring vertices in the round robin fashion.
		{\sc Rotor-router model} has the same upper bound as {\sc Send}$(\lceil \conf^{(t)}_v/(d+1) \rceil\ {\rm or}\ \lfloor \conf^{(t)}_v/(d+1) \rfloor)$ (cf. \cite{RSW98, SYKY13}).
		Berenbrink~et~al.~\cite{BCFFS15} gave a lazy version of the rotor-router model {\sc LRotor-router model}, i.e. each vertex has self loops more than $d/2$, 
		and showed that $\Order(d)$ discrepancy within $\Order(\Tc+(d\log^2\Nov)/(1-\sE))$ steps.
		
		\noindent \textbf{Algorithms occurring negative loads: }
		We also note that all above algorithms satisfy the property such that each tokens distributes its own loads to each neighbor (and itself).
		There are several previous works corresponding to the algorithms occurring {\em negative loads}, 
		i.e. each vertex possibly sends {\em more than its own loads} to each neighbor (and itself). 
		For example, consider the algorithm such that each vertex sends randomly rounded
		($\lfloor \conf^{(t)}_v/(d+1) \rfloor$ or $\lceil \conf^{(t)}_v/(d+1) \rceil$) loads to each neighbor and itself.
		This algorithm possibly occurs negative loads, i.e. total amount of the sent loads possibly becomes $\lceil \conf^{(t)}_v/(d+1) \rceil (d+1)>\conf^{(t)}_v$.
		Sauerwald and Sun showed that $\Order(\sqrt{d \log \Nov})$ discrepancy within $\Order(\Tc)$ with high probability for this algorithm. 
		They also showed that on arbitrary graphs, the algorithm rounding $\lceil \conf^{(t)}/2d_{\max}\rceil $ or $\lfloor \conf^{(t)}/2d_{\max}\rfloor$ with appropriate probability
		archives $\Disc(\conf^{(T)})=\Order(\Psi_2(P)\sqrt{\log \Nov})$ with high probability, where $\Psi_2(P)$ denotes 
		the {\em local $2$-divergence} (See \eqref{def:local2} for the precise definition), 
		and showed that $\Psi_2(P)=\Order(d_{\rm max})$ for the round matrix $P_{v,u}=1/cd_{\max}$ for any $\{v,u\}\in E$ (and the remaining is the self loop), 
		where $c$ is some constant.
		The algorithm in \cite{ABS13} corresponds to a diffusion algorithm with the state and negative loads. 
		This algorithm achieves $\Order(\sqrt{d_{\max} \log \Nov})$ discrepancy for arbitrary graphs.
		
		\noindent \textbf{Matching algorithms: }
		Matching based algorithms have been well studied as well as diffusion algorithms.
		Matching-based algorithms generate a matching of the graph in a distributed way at each round, 
		and the endpoints of each matching edge balances loads as evenly as possible. 
		Friedrich and Sauerwald~\cite{FS09} studied randomized version of matching models on regular graphs.
		They showed that $\Disc(\conf^{(T)})=\Order(\Psi_2(P)\sqrt{\log \Nov})$ with high probability. 
		They also showed that $\Psi_2(P)=\Order(\sqrt{d/(1-\sE)})$.
		The results of Sauerwald and Sun~\cite{SS14} is the best result of the discrepancy so far. 
		They showed that a constant discrepancy within $\Order(T)$ step for a randomized matching model on regular graphs.

	\subsection{This work}
		\noindent \textbf{Motivation: }
		The strength of diffusion algorithms is its strong locality.
		The ability required to each vertex is only to count its own loads (degree) and to send its own loads to each neighbor.
		There is no need to communicate and check the amount of neighbor's loads like matching algorithms. 
		
		Our main concern of this paper is to investigate the discrepancy of the discrete diffusion algorithms with simplest assumptions, 
		i.e. stateless and non-negative loads diffusion algorithms. 
		We call these algorithms {\em natural} diffusion algorithms.
		This framework contains Markov chains (multiple random walks).
		
		In previous works, the $\Omega(d)$ lower bound of the discrepancy for any {\em deterministic} natural diffusion algorithms on $d$ regular graphs has been shown~\cite{BKKMU15}.
		Similarly, one can guess a $\Omega(d)$ bound for any {\em randomized} natural diffusion algorithm
		since the discrepancies of all previous upper bounds of randomized natural diffusion algorithms depend on the polynomial of $d$.
		However, a lower bound of randomized natural diffusion algorithms has not been known, 
		i.e. no one knows that whether there is a randomized natural diffusion algorithm with the discrepancy ${\rm o}(d)$ or not.
		
		\noindent \textbf{Results: }
		For this question, this paper proposes a {\em new} randomized natural diffusion algorithm
		which archives $\Order(\sqrt{d\log \Nov})$ discrepancy within $\Order(\Tc)$ steps with high probability on $d$ regular graphs.
		This result gives a positive answer to the question, i.e. breaks $\Omega(d)$ barrier,
		since the result guarantees that the discrepancy is ${\rm o}(d)$ for $d=\omega(\log N)$ regular graphs.
		Surprisingly, even though we compared with the best upper bound of the diffusion algorithm allowing negative loads~\cite{SS14} or with the state~\cite{ABS13}, 
		this upper bound for our natural diffusion algorithm is the same magnitude. 
		
		Furthermore, we succeed in generalizing the proposed algorithm for arbitrary symmetric round matrices. 
		This allows us to construct a randomized natural diffusion algorithms on arbitrary graphs with 
		$\Order(\sqrt{d_{\rm max}\log \Nov})$ discrepancy within $\Order(\Tc)$ steps with high probability. 
		\subsubsection{Result on regular graphs}
			First, we introduce a new randomized natural diffusion algorithm (Algorithm~1).
			The main idea of this algorithm is to add the {\em laziness} to {\sc RSend}$(\lceil \conf^{(t)}_v/(d+1) \rceil\ {\rm or}\ \lfloor \conf^{(t)}_v/(d+1) \rfloor)$.
			%
			Let $G=(V,E)$ is an arbitrary $d$-regular graph with $|\VS|=\Nov$ vertices.
			For each $v\in V$, let $v_0, v_1, \ldots, v_{d-1}$ denote the $d$ neighbors of $v$.
			
			\noindent \textbf{Definition of Algorithm~1: }
			Let $\conf^{(0)}\in \mathbb{Z}^\Nov_{\geq 0}$ be a initial configuration of $\Not$ loads over $\VS$, 
			and let $\conf^{(t)}\in \mathbb{Z}^\Nov_{\geq 0}$ denote the configuration of $\Not$ loads over $\VS$ at time $t\in \mathbb{Z}_{\geq 0}$ in our algorithm.
			In an update from $\conf^{(t)}$ to $\conf^{(t+1)}$ in our algorithm, at each vertex $v$, 
			each load $k$ ($k\in \{0,1,\ldots,\conf^{(t)}_v-1\}$) randomly samples a number $r^{(t)}_v(k)$ from the interval $\left[ \frac{k}{\conf^{(t)}_v},  \frac{k+1}{\conf^{(t)}_v} \right)$.
			Then, each load $k$ moves to a neighbor $v_i$ if sampled random number is in the interval of $v_i$, 
			i.e. each load $k$ moves to $v_i$ if $r_v(k)\in \left[ \frac{i}{2d},  \frac{i+1}{2d} \right)$ (if $r_v(k)\in [1/2,1)$, load $k$ stays at $v$).
\if0
\begin{lstlisting}[caption={(on $d$ regular graphs $G=(V,E)$)}, label=Alg1]
At each time step @$t\geq 0$@ and at each vertex @$v\in V$@
	for @$k\in \{0,1,\ldots, \conf^{(t)}_v-1\}$@:
		load @$k$@ randomly samples a number @$r_v^{(t)}(k)$@ from @$[k/\conf^{(t)}_v, (k+1)/\conf^{(t)}_v)$@
		for @$i\in \{0,1,\ldots, d-1\}$@:
			if @$r_v^{(t)}(k) \in [i/2d, (i+1)/2d)$@: load @$k$@ moves to vertex @$v_i$@
				/* @if $r_v^{(t)}(k) \in [1/2,1)$, then load $k$ stays $v$@ */
\end{lstlisting}
\fi

			Obviously, Algorithm~1 is a natural diffusion algorithm. 
			If each load randomly samples a number from $[0,1)$, then this is multiple {\em lazy} random walks.
			Laziness is a important property to estimate the discrepancy (cf.~\cite{BKKMU15}). 
			For this algorithm, we showed the following upper bound of the discrepancy.
			\begin{theorem}[Result on regular graphs]
				\label{thm:reg}
				Suppose that $G=(V,E)$ is an arbitrary connected $d$-regular graph.
				Then, for any $\conf^{(0)}$ and for each $T\geq \frac{\log(4\Disc(\conf^{(0)}) \Nov)}{1-\sE}$, 
				$\conf^{(T)}$ of Algorithm~1 satisfies that 
				\begin{eqnarray*}
					\Pro\left[ \Disc(\conf^{(T)}) \leq 18\sqrt{d\log \Nov} \right]\geq 1-\frac{2}{\Nov}.
				\end{eqnarray*}
			\end{theorem}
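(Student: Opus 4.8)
The plan is to compare Algorithm~1 against the \emph{continuous} (divisible-load) diffusion driven by the symmetric lazy transition matrix $P$ with $P_{v,v}=1/2$ and $P_{v,v_i}=1/(2d)$, and to control the discrete--continuous gap by a martingale argument tailored to the variance reduction built into the stratified sampling. First I would set up the continuous reference $\xi^{(t)}\defeq P^t\conf^{(0)}$. Since $P$ is symmetric with uniform stationary vector and nonnegative spectrum whose second largest eigenvalue is $\sE$, one has $\|\xi^{(t)}-(\Not/\Nov)\1\|_2\le\sE^t\|\conf^{(0)}-(\Not/\Nov)\1\|_2\le\sE^t\sqrt{\Nov}\,\Disc(\conf^{(0)})$, so the hypothesis $T\ge\log(4\Disc(\conf^{(0)})\Nov)/(1-\sE)$ forces $\Disc(\xi^{(T)})<1$. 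Because $\Disc(\conf^{(T)})\le\Disc(\xi^{(T)})+2\max_u|\conf^{(T)}_u-\xi^{(T)}_u|$, it then suffices to show $\max_u|\conf^{(T)}_u-\xi^{(T)}_u|\le 9\sqrt{d\log\Nov}$ with probability at least $1-2/\Nov$.

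Next I would isolate the rounding errors. Let $e^{(t)}_{v,w}$ be the actual number of loads sent $v\to w$ minus $\conf^{(t)}_vP_{v,w}$. Two structural facts hold: $\sum_w e^{(t)}_{v,w}=0$ (every load is routed somewhere), and $\E[e^{(t)}_{v,w}\mid\conf^{(t)}]=0$, because the strata $[k/\conf^{(t)}_v,(k+1)/\conf^{(t)}_v)$ partition $[0,1)$, so the expected count landing in any length-$1/(2d)$ destination interval is exactly $\conf^{(t)}_v/(2d)$. Writing $\conf^{(t+1)}=P\conf^{(t)}+\epsilon^{(t)}$ with $\epsilon^{(t)}_w=\sum_v e^{(t)}_{v,w}$, unrolling the recursion, and recentering each inner sum using the zero-sum property yields, for every fixed $u$,
\[
\conf^{(T)}_u-\xi^{(T)}_u=\sum_{t=0}^{T-1}\sum_{v}\sum_{i=0}^{d-1}\Big[(P^{T-1-t})_{v_i,u}-(P^{T-1-t})_{v,u}\Big]\,e^{(t)}_{v,v_i}.
\]

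I would then establish a variance-reduced concentration by revealing the randomness one load at a time. The contribution of load $k$ at vertex $v$ at time $t$ is $D_{t,v,k}=\hat a(r^{(t)}_v(k))-\E[\hat a(r^{(t)}_v(k))\mid\conf^{(t)}]$, where $\hat a(x)$ is the deterministic (hence predictable) weight $(P^{T-1-t})_{w,u}-(P^{T-1-t})_{v,u}$ of the destination interval containing $x$. These form a martingale difference sequence with $|D_{t,v,k}|\le 2$, and --- this is where Algorithm~1's design enters --- the law of total variance gives $\sum_k\mathrm{Var}(D_{t,v,k}\mid\conf^{(t)})\le\frac54\sum_{i}((P^{T-1-t})_{v_i,u}-(P^{T-1-t})_{v,u})^2$ \emph{independently of the load} $\conf^{(t)}_v$; for the i.i.d.\ rounding of \textsc{RSend} this same sum would scale linearly in $\conf^{(t)}_v$. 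Summing over $t,v$ and reindexing, the total conditional variance is at most $\frac52\sum_{s\ge0}\sum_{\{v,w\}\in E}((P^s)_{v,u}-(P^s)_{w,u})^2$. I would bound this local $2$-divergence spectrally: with $f_s=(P^s)_{\cdot,u}$ and $P=(I+M)/2$ for the symmetric walk $M$ ($M_{v,w}=1/d$ on edges, eigenpairs $(\lambda_j,\phi_j)$), one has $\sum_{\{v,w\}\in E}(f_s(v)-f_s(w))^2=d\,f_s^\top(I-M)f_s$, so that $\sum_{s\ge0}\sum_{\{v,w\}\in E}(f_s(v)-f_s(w))^2=d\sum_j\phi_j(u)^2\frac{4}{3+\lambda_j}\le 2d$. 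Laziness is essential exactly here: the spectrum of $P$ is nonnegative, so the geometric factor $\frac{4}{3+\lambda_j}$ stays below $2$ and no $1/(1-\sE)$ dependence survives. Hence the total variance is $\le 5d$ and $\max|D_{t,v,k}|\le2$, so Freedman's inequality gives $\Pro[|\conf^{(T)}_u-\xi^{(T)}_u|>9\sqrt{d\log\Nov}]\le 2\Nov^{-2}$; a union bound over the $\Nov$ choices of $u$ together with the first paragraph completes the proof.

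The main obstacle is the concentration in the previous paragraph. A per-time-step increment can be as large as $\Order(d)$ near $u$, which is fatal for a variance-based bound and would only yield the weaker $\Order(d\sqrt{\log\Nov})$ estimate through Azuma. Overcoming this requires simultaneously (i) an $\Order(1)$ martingale increment, obtained by decomposing to the level of a single load, and (ii) a total conditional variance that remains $\Order(d)$ no matter how large the individual loads are, obtained from the stratified sampling via the law of total variance. Securing both at once, and then matching them to the spectral (laziness-driven) bound $\Order(d)$ on the local $2$-divergence, is the technical heart of the argument.
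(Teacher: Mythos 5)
Your overall architecture coincides with the paper's: the same continuous reference $\conf^{(0)}P^{T}$ controlled by the spectral bound of Proposition~\ref{prop:cont}, the same unrolled error decomposition (Lemma~\ref{lemm:discframe}), a martingale revealed one load at a time exactly as in \eqref{def:rvDs} and Lemma~\ref{lemm:mainM}, the same two structural facts (zero conditional mean because the strata tile $[0,1)$, and load-independence of the per-vertex quadratic contribution because at most two strata straddle the endpoints of each destination interval --- this is Lemma~\ref{lemm:discneigh}), and a bound on the local $2$-divergence equivalent to Corollary~\ref{cor:local2}: your eigenvalue computation $\sum_j \frac{4}{3+\lambda_j}\phi_j(u)^2\le 2$ is the spectral-side view of the paper's Dirichlet-form telescoping, and both hinge on laziness in the same way.

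The genuine gap is the concentration step. You invoke Freedman's inequality with predictable quadratic variation $\sigma^2\le 5d$ and worst-case increment $R=2$. Freedman's tail is $2\exp\bigl(-\lambda^2/(2\sigma^2+2R\lambda/3)\bigr)$, and at $\lambda=9\sqrt{d\log\Nov}$ the correction term $2R\lambda/3=\Theta(\sqrt{d\log\Nov})$ dominates $2\sigma^2=10d$ whenever $d=\order(\log \Nov)$; the exponent then degrades to $\Theta(\sqrt{d\log\Nov})$, so the failure probability is $\Nov^{-\order(1)}$ rather than $\Order(\Nov^{-2})$. No constant rescues this: solving the quadratic shows Freedman forces $\lambda=\Omega(\log\Nov)$ once $d\le \log \Nov$, whereas the theorem claims $\Order(\sqrt{d\log\Nov})$ for every connected $d$-regular graph, including cycles. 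The paper sidesteps this by driving Azuma--Hoeffding not with the conditional variance but with the almost-sure bound $\sum_\ell(\Y_{\ell+1}-\Y_\ell)^2\le (2\Psi_2(P))^2$ of \eqref{eq:comp}, obtained from the same straddling-strata fact via Cauchy--Schwarz; equivalently one can apply Hoeffding's lemma to each increment conditioned on the past (its range is the diameter of the set of weights of the destination intervals that its stratum meets, a predictable quantity) and sum the resulting sub-Gaussian variance proxies to $\Order(\Psi_2(P)^2)=\Order(d)$. Either repair removes the additive $R\lambda$ term and closes the gap; the rest of your argument, up to the trivial slack $1\le\sqrt{d\log\Nov}$ needed to absorb the continuous discrepancy into the constant $18$ as in Theorem~\ref{thm:maindisc}, then goes through.
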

			Similar to natural deterministic diffusion algorithms, using laziness improves the previous upper bounds of randomized natural diffusion algorithms \cite{BCFFS15, AB13,SS14}.
			Table~\ref{table:1} summarizes the discrepancies of previous results and this work on regular graphs.
			\begin{table}[t]
				\centering
				\begin{tabular}{lclccl}
				\bhline{1.5pt}
				Algorithm&
				D/R&
				$\Disc(\conf^{(\Tc)})$&
				NL&
				SL&
				Ref. 
				\\ \bhline{1.2pt}
				{\sc Send}$\left(\left \lceil \frac{\conf^{(t)}_v}{d+1} \right \rceil\ {\rm or}\ \left \lfloor \frac{\conf^{(t)}_v}{d+1} \right \rfloor\right)$
				&D
				&$\Order\left(d\frac{\log \Nov}{1-\sE}\right)$
				&\cmark
				&\cmark
				&\cite{RSW98}
				\\ 
				{\sc Send}$\left(\left \lfloor \frac{\conf^{(t)}_v}{2d} \right \rfloor \right)$
				&D
				&$\Order\left(d\sqrt{\frac{\log \Nov}{1-\sE}}\right)$
				&\cmark
				&\cmark
				&\cite{BKKMU15}
				\\ 
				{\sc Send}$\left(\left[ \frac{\conf^{(t)}_v}{3d} \right] \right)$
				 &D
				 &$\Order(d)\ $\footnotemark[1]
				&
				 \cmark&
				 \cmark&
				\cite{BKKMU15}
				\\ 
				{\sc RSend}$\left(\left \lceil \frac{\conf^{(t)}_v}{d+1} \right \rceil\ {\rm or}\ \left \lfloor \frac{\conf^{(t)}_v}{d+1} \right \rfloor\right)$
				&R
				&$\Order\left(\left(d+\sqrt{\frac{d\log d}{1-\sE}}\right)\sqrt{\log \Nov}\right)$
				&\cmark
				&\cmark
				&\cite{BCFFS15}
				\\ 
				
				&
				&$\Order\left(d\frac{\log \log \Nov}{1-\sE}\right)$
				&
				&
				&\cite{BCFFS15}
				\\
				
				&
				&$\Order\left(d^2\sqrt{\log \Nov}\right)$
				&
				&
				&\cite{SS14}
				\\
				{\sc RRotor-router}
				&R
				&$\Order\left(d\frac{\log \log \Nov}{1-\sE}\right)$
				&\cmark
				&\cmark
				&\cite{AB13}
				\\ \bhline{1.2pt}
				Algorithm 1
				&R
				&$\Order \left(\sqrt{d\log \Nov}\right)$
				&\cmark
				&\cmark
				&Theorem~\ref{thm:reg}
				\\ \bhline{1.2pt}
				{\sc Rotor-router}
				&D
				&$\Order\left(d\frac{\log \Nov}{1-\sE}\right)$
				&\cmark
				&\xmark
				&\cite{RSW98}
				\\ 
				{\sc LRotor-router}
				&D
				&$\Order(d)\ $\footnotemark[2]
				&\cmark
				&\xmark
				&\cite{BKKMU15}
				\\ 
				{\sc Algorithm} in \cite{ABS13}
 				&R
				&$\Order \left(\sqrt{d\log \Nov}\right)$
				&\xmark
				&\xmark
				&\cite{ABS13}
				\\
				{\sc Neg.RSend}
				&R
				&$\Order \left(\sqrt{d\log \Nov}\right)$
				&\xmark
				&\cmark
				&\cite{SS14}
				\\ 
				\bhline{1.5pt}
				\end{tabular}
				\caption{Previous works and our result on $d$-regular graphs. D/R represents a deterministic/randomized algorithm. NL represents an algorithm which does not occur negative loads. SL represents state less algorithms.}
				\label{table:1}
			\end{table}		
		\subsubsection{Generalized algorithm and the result on arbitrary graph}\label{sec:genresult}
			Furthermore, we succeeded in generalizing Algorithm~1 for any {\em round matrix} $P$ (Algorithm~2).
			
			\noindent \textbf{Notations: }
			Let $\VS$ be a vertex set, and let $\Nov=|\VS|$.
			Let $\trM \in \mathbb{R}_{\geq 0}^{\Nov \times \Nov}$ be a round (transition) matrix on $V$, 
			i.e. $\sum_{u\in \VS}\trM_{v,u}=1$ holds for any $v\in \VS$, where $\trM_{v,u}$ denotes $(v,u)$ entry of $\trM$.
			For a round matrix $P$ on $V$, let $\trN_v$ be the set of neighbors of $v\in V$, i.e. $\trN_v\defeq \{u\in V\mid \trM_{v,u}>0\}$. 
			In this paper, we assume an arbitrary ordering on $\trN_v$, i.e. we denote $\trN_v=\{v_0, v_1,\ldots, v_{\trd_v-1}\}$, where $\trd_v=|\trN_v|$.
\footnotetext[1]{Discrepancy after $\Order(\Tc+(\log \Nov)/(1-\sE))$ steps.}
\footnotetext[2]{Discrepancy after $\Order(\Tc+(d\log^2\Nov)/(1-\sE))$  steps.}	

			\noindent \textbf{Definition of Algorithm~2: }
			Let $\conf^{(0)}\in \mathbb{Z}^\Nov_{\geq 0}$ be a initial configuration of $\Not$ loads over $\VS$, 
			and let $\conf^{(t)}\in \mathbb{Z}^\Nov_{\geq 0}$ denote the configuration of $\Not$ loads over $\VS$ at time $t\in \mathbb{Z}_{\geq 0}$ in our algorithm.
			In an update from $\conf^{(t)}$ to $\conf^{(t+1)}$ in our algorithm, at each vertex $v$, 
			each load $k$ ($k\in \{0,1,\ldots, \conf^{(t)}_v-1\}$) randomly samples a random number $r_v(k)$ from the interval $\left[ \frac{k}{\conf^{(t)}_v},  \frac{k+1}{\conf^{(t)}_v} \right)$.
			Then, each load $k$ moves to its corresponding neighbor $v_i$, i.e. load $k$ moves to $v_i$ if $r_v^{(t)}(k)\in \left[\sum_{j=0}^{i-1}P_{v,v_j}, \sum_{j=0}^{i}P_{v,v_j} \right)$ 
			(let $\sum_{j=0}^{-1}P_{v,v_j}=0$).
\if0
\begin{lstlisting}[caption={(according to the round matrix $P$)}, label=Alg2]
At each time step @$t$@ and at each vertex @$v$@
	for @$k\in \{0,1,\ldots, \conf^{(t)}_v-1\}$@:
		load @$k$@ randomly samples a number @$r_v^{(t)}(k)$@ from @$[k/\conf^{(t)}_v, (k+1)/\conf^{(t)}_v)$@
		for @$i\in \{0,1,\ldots, d^P_v-1\}$@:
			if @$r_v^{(t)}(k) \in \left[\sum_{j=0}^{i-1}P_{v,v_j}, \sum_{j=0}^{i}P_{v,v_j} \right)$@: load @$k$@ moves to vertex @$v_i$@
\end{lstlisting}
\fi
\begin{figure}[t]
\centering
\includegraphics[width=16.0cm]{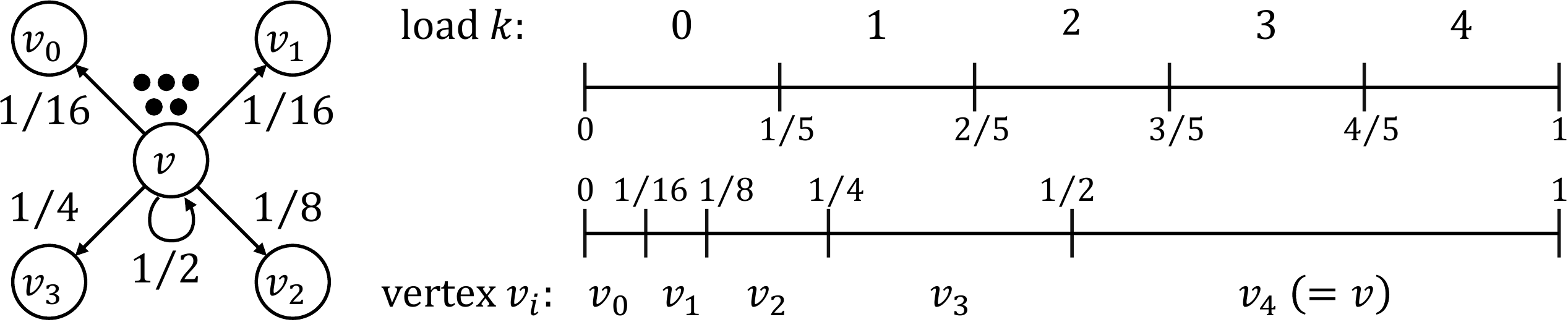}
\caption{In this example, there are $5$ loads at $v$. Then, load $0$ moves $v_0$ / $v_1$ / $v_2$ w.p. $\frac{5}{80}\cdotp 5$ / $\frac{5}{80}\cdotp 5$ / $\frac{6}{80}\cdotp 5$. Load $1$ moves $v_2$ / $v_3$ w.p. $\frac{1}{20}\cdotp 5$ / $\frac{3}{20}\cdotp 5$. Load $2$ moves to $v_3$ / $v_4$ w.p. $\frac{1}{2}$ / $\frac{1}{2}$. Load $3$ / $4$ moves to $v_4$ w.p. $1$.}
\label{fig:Alg2}
\end{figure}

			This is a generalization of Algorithm~1. We put an example on Figure~\ref{fig:Alg2}.
			Only different point compared with multiple random walks according to $P$ is that each load $k$ randomly samples a number from $[k/\conf^{(t)}_v, (k+1)/\conf^{(t)}_v)$, 
			while each load randomly samples a number from $[0,1)$ in multiple random walks. 

			One of the main reason introducing the generalized algorithm is to use the (lazy) {\em Metropolis chain} $\trMM$ on arbitrary graphs.
			This chain is defined by as follows: 
			$(\trMM)_{v,u}\defeq 1/(2\max\{d_v,d_u\})$ for any $\{v,u\}\in E$, $(\trMM)_{v,v}\defeq 1-\sum_{u:\{v,u\}\in E}(\trMM)_{v,u}$ for any $v\in V$ 
			and $(\trMM)_{v,u}\defeq 0$ for any $\{v,u\}\notin E$.
			The main strength of this chain is that there is no need to require each vertex to know the maximum degree over all vertices.
			Then, we show the following upper bound of Algorithm~2 according to $\trMM$ on arbitrary graphs.
			\begin{theorem}[Results on arbitrary graphs]
				\label{thm:gen}
				Suppose that $G=(V,E)$ is an arbitrary connected graph. 
				Then, for any $\conf^{(0)}$ and for each $\Tc\geq \frac{\log(4\Disc(\conf^{(0)}) \Nov)}{1-\sE}$, 
				$\conf^{(T)}$ of Algorithm~2 according to $\trMM$ satisfies that 
				\begin{eqnarray*}
					\Pro\left[ \Disc(\conf^{(T)})  \leq 16\sqrt{d_{\max}\log \Nov} \right]\geq 1-\frac{2}{\Nov}.
				\end{eqnarray*}
			\end{theorem}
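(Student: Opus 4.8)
The plan is to follow the Rabani--Schulman--Weiss paradigm, decomposing the discrete load vector as an idealized \emph{continuous} diffusion plus an accumulated \emph{rounding error}, and then to show that (i) the continuous part already has discrepancy below $1$ at time $T$, and (ii) the rounding error deviates from its mean by only $\Order(\sqrt{d_{\max}\log\Nov})$ with high probability. Write $P\defeq\trMM$; since $P_{v,u}=1/(2\max\{d_v,d_u\})$ is symmetric and doubly stochastic with uniform stationary distribution, and since every self-loop weight is at least $1/2$, $P$ is positive semidefinite and its spectral gap is $1-\sE$. For the continuous part, $\conf^{(0)}P^T$ converges to the average load at rate $\sE^T\le \e^{-(1-\sE)T}$; plugging in $T\ge \log(4\Disc(\conf^{(0)})\Nov)/(1-\sE)$ and using $\|\conf^{(0)}-\bar\conf\mathbf 1\|_2\le\sqrt{\Nov}\,\Disc(\conf^{(0)})$ gives $\Disc(\conf^{(0)}P^T)\le 1$. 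It then remains to bound the deviation of each $\conf^{(T)}_w$ from $(\conf^{(0)}P^T)_w$.

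Second, I would set up a martingale. Fix a target vertex $w$ and let $\mathcal F_t$ be the $\sigma$-field generated by all samples up to step $t$. Because the subintervals $[k/\conf^{(t)}_v,(k+1)/\conf^{(t)}_v)$ tile $[0,1)$ and each load samples uniformly inside its own subinterval, the expected number of loads sent from $v$ to $v_i$ is exactly $\conf^{(t)}_v P_{v,v_i}$; hence $\E[\conf^{(t+1)}\mid\mathcal F_t]=\conf^{(t)}P$, and $M_t\defeq(\conf^{(t)}P^{T-t})_w$ is a martingale with $M_0=(\conf^{(0)}P^T)_w$ and $M_T=\conf^{(T)}_w$. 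Writing $p^{(t)}\defeq(P^{T-t-1})_{\cdot,w}$, the increment $M_{t+1}-M_t$ equals the sum over vertices $v$ of $\sum_k\big(p^{(t)}_{\sigma(k)}-\E\,p^{(t)}_{\sigma(k)}\big)$, where $\sigma(k)$ is the random destination of load $k$ (independent across loads given $\mathcal F_t$).

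Third --- and this is the heart --- I would bound the predictable quadratic variation $\sum_t\E[(M_{t+1}-M_t)^2\mid\mathcal F_t]$ and show it is $\Order(d_{\max})$, crucially \emph{independent} of the spectral gap. By independence of the loads this equals $\sum_t\sum_v\sum_k \mathrm{Var}_{I_k}(F^{(t,v)})$, the total \emph{within-stratum} variance of the step function $F^{(t,v)}$ that reads off $p^{(t)}$ as a function of position in $[0,1)$. The stratification is exactly what makes this small: a load contributes variance only if its width-$1/\conf^{(t)}_v$ subinterval straddles a boundary of $P$, and each such contribution is at most a quarter of the squared jump of $p^{(t)}$ across that boundary. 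Bounding these jumps by the incident edge differences $(p^{(t)}_u-p^{(t)}_v)^2$ and using $P_{v,u}\ge 1/(2d_{\max})$ for every edge of the Metropolis chain, the per-step variance is at most $\Order(d_{\max})$ times the Dirichlet form $\langle p^{(t)},(I-P)p^{(t)}\rangle$. Summing over $t$ and using symmetry, $\sum_t\langle p^{(t)},(I-P)p^{(t)}\rangle=\sum_{s\ge0}\big((P^{2s})_{w,w}-(P^{2s+1})_{w,w}\big)$, and since laziness makes $P$ positive semidefinite the diagonal entries $(P^m)_{w,w}$ are nonincreasing in $m$, so this telescopes to at most $(P^0)_{w,w}-\lim_m (P^m)_{w,w}\le 1$. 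Hence the total variance is $\Order(d_{\max})$.

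Finally, I would apply a Bernstein/Freedman martingale inequality with variance proxy $\Order(d_{\max})$ to get $|\conf^{(T)}_w-(\conf^{(0)}P^T)_w|\le \Order(\sqrt{d_{\max}\log\Nov})$ with probability $1-\Nov^{-2}$, union-bound over all $w\in\VS$, and combine with $\Disc(\conf^{(0)}P^T)\le 1$ to obtain $\Disc(\conf^{(T)})\le 16\sqrt{d_{\max}\log\Nov}$ with probability $1-2/\Nov$ after tracking constants. I expect the main obstacle to be precisely the gap-independent variance bound of the third step: one must argue the within-stratum variances telescope cleanly, which relies on laziness (for positive semidefiniteness and monotone diagonals) and on the Metropolis chain's uniform lower bound $1/(2d_{\max})$ on edge weights. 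A secondary technical point is controlling the individual martingale increments so that the lower-order term in Freedman's inequality does not spoil the $\sqrt{d_{\max}\log\Nov}$ rate.
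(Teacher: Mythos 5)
Your outline reproduces the substance of the paper's argument: the same decomposition of $\conf^{(T)}_w$ into the continuous diffusion $(\conf^{(0)}P^T)_w$ (handled by Proposition~\ref{prop:cont} with $\varepsilon=1$) plus a deviation term; the same observation that a load can deviate from its conditional mean only when its width-$1/\conf^{(t)}_v$ subinterval straddles a boundary of the partition $\{\IP_{v,u}\}$, which is exactly what Lemmas~\ref{lemm:discload} and~\ref{lemm:discneigh} encode; and the same gap-independent variance bound via the Dirichlet-form identity ${\cal E}(P^t_{\cdot,w})=\pi_w(P^{2t}_{w,w}-P^{2t+1}_{w,w})$ of Lemma~\ref{lemm:dirichlet} together with the monotonicity of $P^t_{w,w}$ for lazy chains, i.e.\ Theorem~\ref{thm:local2} and Corollary~\ref{cor:local2} specialized to $\trMM$ with $\min P_{v,u}=1/(2d_{\max})$. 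All of the substantive ideas are present.

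The step that would fail as written is the final concentration. You apply Freedman's inequality to the coarse martingale $M_t=(\conf^{(t)}P^{T-t})_w$ with variance proxy $\Order(d_{\max})$, but Freedman also requires an almost-sure bound $R$ on the individual increments, and here $|M_{t+1}-M_t|$ can genuinely be of order $d_{\max}$: at $t=T-1$ the weight vector is the indicator of $w$, and each of the up to $d_{\max}+1$ vertices $v$ with $w\in\trN_v$ can have boundary-straddling loads that all round toward (or all away from) $w$. With $R=\Theta(d_{\max})$ and target deviation $\eta=\Theta(\sqrt{d_{\max}\log\Nov})$, the term $R\eta$ in Freedman's denominator is $\Theta(d_{\max}^{3/2}\sqrt{\log\Nov})$, which swamps the variance term and yields failure probability only $\exp(-\Order(\sqrt{\log\Nov/d_{\max}}))$, not $1/\mathrm{poly}(\Nov)$ for large $d_{\max}$. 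This is not a secondary tuning issue but the reason the paper refines the filtration to one increment per load (the variables $\rvDs_0,\ldots,\rvDs_{\Not\Nov T-1}$ and Lemma~\ref{lemm:mainM}): for that finer martingale the \emph{realized} sum of squared increments is bounded deterministically by $(2\Psi_2(P))^2=\Order(d_{\max})$ via Cauchy--Schwarz combined with Lemmas~\ref{lemm:discload} and~\ref{lemm:discneigh}, so Azuma--Hoeffding applies with no lower-order term. Your within-stratum variance computation transfers verbatim to the finer martingale, so the repair is local, but without it the claimed high-probability bound does not follow.
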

			This upper bound dramatically improves the previous works on the framework of the natural diffusion adaptive to the metropolis chain: 
			$\Order(d_{\rm max}\log \Nov/(1-\sE))$ for {\sc DSend}$(\lceil \conf^{(t)}_v(\trMM)_{v,u}\rceil$ or $\lfloor \conf^{(t)}_v(\trMM)_{v,u} \rceil)$.
			Even though we compared with the best upper bound of the diffusion algorithm allowing negative loads and the information of the maximum degree~\cite{SS14}, 
			the upper bound of Theorem~\ref{thm:gen} for Algorithm~2 satisfies the same magnitude. 
			%
		%
		\subsubsection{Idea of the proof and technical contribution}
			Above our main theorems are shown by a load (token)-based analysis.
			The load configuration of Algorithm~2 is determined by random variables of the {\em destinations} of each load at each vertex (Observations~\ref{obs:defconf}).
			The properties of the destinations of each load are described in 
			Observations~\ref{obs:exD} (Expectation of the destinations) and ~\ref{obs:indepD} (Independency of the destinations). 
			Especially, Observations~\ref{obs:indepD} plays a key role to prove the key lemma (Lemma~\ref{lemm:mainM}) stating the Martingale difference.
			This and the well-known concentration inequality (Azuma–Hoeffding inequality) allow us to get the following good upper bound of 
			the Discrepancy between discrete and continuous diffusions in general form.
			\begin{theorem}[Discrepancy between discrete and continuous diffusions]
				\label{thm:main}
				For any $\conf^{(0)}$, for any round matrix $\trM$ and for each time $T\in \mathbb{Z}_{\geq 0}$, 
				$\conf^{(T)}$ of Algorithm~2 according to $P$ satisfies that 
				\begin{eqnarray*}
					\Pro\left[ \max_{w\in V}\Bigl|\conf^{(\tT)}_w-(\conf^{(0)}\trM^\tT)_w\Bigr| \leq 4\Psi_2(\trM)\sqrt{\log \Nov} \right] \geq 1-\frac{2}{\Nov}.
				\end{eqnarray*}
			\end{theorem}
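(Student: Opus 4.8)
The plan is to fix a target vertex $w\in\VS$ and establish the single-vertex tail bound $\Pro[\,|\conf^{(\tT)}_w-(\conf^{(0)}\trM^{\tT})_w|\ge\lambda\,]\le 2\Nov^{-2}$ for $\lambda=4\Psi_2(\trM)\sqrt{\log\Nov}$, after which a union bound over the $\Nov$ choices of $w$ yields the stated $1-2/\Nov$ guarantee. To expose the concentration I would introduce the process
\begin{eqnarray*}
M_t \defeq (\conf^{(t)}\trM^{\tT-t})_w,\qquad t=0,1,\ldots,\tT,
\end{eqnarray*}
so that $M_0=(\conf^{(0)}\trM^{\tT})_w$ is the continuous load, $M_{\tT}=\conf^{(\tT)}_w$ is the discrete load, and the quantity to control is exactly $M_{\tT}-M_0$. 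By Observation~\ref{obs:exD}, $\E[\conf^{(t+1)}_u\mid\mathcal{F}_t]=(\conf^{(t)}\trM)_u$, where $\mathcal{F}_t$ is the filtration generated by the first $t$ rounds, so $\E[M_{t+1}\mid\mathcal{F}_t]=((\conf^{(t)}\trM)\trM^{\tT-t-1})_w=M_t$; thus $(M_t)$ is a martingale with increments
\begin{eqnarray*}
M_{t+1}-M_t=\bigl((\conf^{(t+1)}-\conf^{(t)}\trM)\,\trM^{\tT-t-1}\bigr)_w=\sum_{v\in\VS}\ \sum_{u\in\trN_v}\Bigl(Y^{(t)}_{v\to u}-\conf^{(t)}_v\trM_{v,u}\Bigr)(\trM^{\tT-t-1})_{u,w},
\end{eqnarray*}
where $Y^{(t)}_{v\to u}$ denotes the number of loads sent from $v$ to $u$ in round $t$.

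Next I would open up a single vertex's contribution using the stratified sampling. Because at $v$ each load $k$ draws $r_v(k)$ independently from its own stratum $[k/\conf^{(t)}_v,(k+1)/\conf^{(t)}_v)$, the number of loads whose sample falls below any prefix threshold $\sum_{j\le i}\trM_{v,v_j}$ equals that ideal prefix mass rounded up or down; hence the prefix rounding error $F^{(t,v)}_i$ lies in $(-1,1)$, has conditional mean $0$, and — the decisive point — the $F^{(t,v)}_i$ are mutually independent across $i$, across $v$, and from the rest of the randomness of round $t$, by Observation~\ref{obs:indepD}. Writing $Y^{(t)}_{v\to v_i}-\conf^{(t)}_v\trM_{v,v_i}=F^{(t,v)}_i-F^{(t,v)}_{i-1}$ and summing by parts (with $F^{(t,v)}_{-1}=F^{(t,v)}_{\trd_v-1}=0$, reflecting load conservation at $v$) turns each vertex's contribution into a sum of independent, mean-zero, bounded terms
\begin{eqnarray*}
\sum_{i=0}^{\trd_v-2}F^{(t,v)}_i\Bigl((\trM^{\tT-t-1})_{v_i,w}-(\trM^{\tT-t-1})_{v_{i+1},w}\Bigr).
\end{eqnarray*}

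I would then assemble the fine-grained martingale difference sequence indexed by all triples $(t,v,i)$ in lexicographic order: Observation~\ref{obs:indepD} gives that each term has conditional mean $0$ given its predecessors, and $|F^{(t,v)}_i|<1$ bounds the $(t,v,i)$ term by $c_{t,v,i}\defeq |(\trM^{\tT-t-1})_{v_i,w}-(\trM^{\tT-t-1})_{v_{i+1},w}|$; this is precisely the content of Lemma~\ref{lemm:mainM}. The Azuma–Hoeffding inequality then yields $\Pro[\,|M_{\tT}-M_0|\ge\lambda\,]\le 2\exp(-\lambda^2/(2\sum_{t,v,i}c_{t,v,i}^2))$, and by the definition of the local $2$-divergence in \eqref{def:local2} one has $\sum_{t,v,i}c_{t,v,i}^2\le\Psi_2(\trM)^2$ for every $w$. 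Plugging $\lambda=4\Psi_2(\trM)\sqrt{\log\Nov}$ makes the exponent at most $-8\log\Nov$, so each vertex fails with probability at most $2\Nov^{-8}$, and the union bound over the $\Nov$ vertices gives failure probability at most $2\Nov^{-7}\le 2/\Nov$.

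The main obstacle is the middle step: arranging the exposed randomness so that the increments are \emph{simultaneously} a legitimate martingale difference sequence and have squared ranges that aggregate to exactly $\Psi_2(\trM)^2$ with no hidden dependence on the total number of loads $\Not$. The naive alternative — treating each load as an independent rounding — retains independence but inflates the variance by a factor growing with $\Not$; it is the stratified sampling that both delivers the independence of the $F^{(t,v)}_i$ (Observation~\ref{obs:indepD}) and caps the total rounding error per neighbor by $1$, and it is the summation-by-parts exploiting $\sum_{u}(Y^{(t)}_{v\to u}-\conf^{(t)}_v\trM_{v,u})=0$ that rewrites the error in terms of consecutive-neighbor propagation gaps matching the divergence \eqref{def:local2}. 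Verifying that the lexicographic filtration is consistent, and handling degenerate thresholds where a single boundary load governs two cut points, is the delicate bookkeeping underlying Lemma~\ref{lemm:mainM}.
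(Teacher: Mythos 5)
Your overall architecture is the paper's: expose the randomness of Algorithm~2 step by step to obtain a Doob martingale running from $(\conf^{(0)}\trM^{T})_w$ to $\conf^{(T)}_w$, bound the sum of squared increments by $\Order(\Psi_2(\trM)^2)$, apply Azuma--Hoeffding, and union-bound over $w$; your increment formula is the analogue of Lemma~\ref{lemm:mainM}. However, two steps in the middle do not hold as stated. First, the prefix errors $F^{(t,v)}_i$ are \emph{not} mutually independent across $i$, and your lexicographically ordered $(t,v,i)$-sequence is not a martingale difference sequence: $F^{(t,v)}_i$ is a function of the single sample $r^{(t)}_v(k)$ of the one load $k$ whose stratum contains the threshold $\sum_{j\leq i}\trM_{v,v_j}$, and one load's stratum can contain several such thresholds (e.g.\ whenever $\conf^{(t)}_v<\trd_v$; see load $0$ in Figure~\ref{fig:Alg2}, which spans three neighbor intervals). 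In that case several consecutive $F^{(t,v)}_i$ are functions of the \emph{same} $r^{(t)}_v(k)$, and conditioning on $F^{(t,v)}_i$ biases $F^{(t,v)}_{i+1}$ away from mean zero. Observation~\ref{obs:indepD} only gives independence across distinct loads, which is exactly why the paper indexes its martingale by loads $(t,v,k)$ rather than by neighbors, and controls the fact that one load can affect many neighbors (and one neighbor many loads) through the Cauchy--Schwarz step combined with Lemmas~\ref{lemm:discload} and~\ref{lemm:discneigh}. This is not deferred bookkeeping; it is the crux, and repairing it essentially forces you back to the paper's load-indexed decomposition.

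Second, the claim that $\sum_{t,v,i}c_{t,v,i}^2\leq\Psi_2(\trM)^2$ ``by the definition of the local $2$-divergence'' is false: your $c_{t,v,i}$ involves differences $\trM^{T-t-1}_{v_i,w}-\trM^{T-t-1}_{v_{i+1},w}$ between \emph{consecutive neighbors} of $v$, and the pair $(v_i,v_{i+1})$ need not satisfy $\trM_{v_i,v_{i+1}}>0$, so these are not terms appearing in \eqref{def:local2}. You must route through $v$ by the triangle inequality, and since each neighbor occurs in at most two consecutive pairs this costs a factor of $4$, i.e.\ $\sum_{t,v,i}c_{t,v,i}^2\leq 4\Psi_2(\trM)^2$. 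That weaker bound still yields the theorem with $\eta=4\Psi_2(\trM)\sqrt{\log\Nov}$ (the exponent becomes $-2\log\Nov$, giving $2/\Nov^2$ per vertex and $2/\Nov$ after the union bound, which is exactly the paper's accounting in \eqref{eq:ah1} and \eqref{eq:comp}), but the step as you wrote it is a genuine gap.
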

			$\conf^{(0)}\trM^\tT$ is the configuration of loads on the continuous diffusion according to $P$ (See Section~\ref{sec:continuous} for the detail). 
			Note that this value also represents the expected configuration of multiple random walks.

			As an technical contribution, we obtain a general upper bound of the $\Psi_2(P)$.
			The definition of the local $p$-divergence is as follows.
			\begin{definition}[local $p$-divergence, \cite{RSW98, FS09}]\label{def:local2}
				For any $p\in \mathbb{Z}_{\geq 0}$, the local-$p$ divergence of $P$ is defined by
				\begin{eqnarray*}
					\Psi_p(P)\defeq \max_{w\in V}\left( \sum_{t=0}^\infty \sum_{(v,u)\in V\times V: \trM_{v,u}>0}|\trM^t_{v,w}-\trM^t_{u,w}|^p \right)^{1/p}.
				\end{eqnarray*}
			\end{definition}
			Although $\Psi_p(P)=\Order(\sqrt{cd_{\max}})$ for the transition matrix on $G=(V,E)$
			such that $P_{v,u}=1/cd_{\max}$ for any $\{v,u\}\in E$ and $P_{v,v}$ is the remaining probability for any $v\in V$~\cite{SS14}, 
			this proof is not simple and it was not clear that if we can extend this proof to any round matrix like Metropolis chain. 

			For this problem, we realized that the equation corresponding to the local-2 divergence can be transformed into a simple equation 
			by the idea of the {\em Dirichlet form} of the {\em reversible} transition matrix (Lemma~\ref{lemm:dirichlet}). 
			Then, adding the monotonicity of the lazy transition matrix ($P^{t+1}_{v,v}\geq P^{t}_{v,v}$), 
			we succeeded in extending the previous work of the upper bound of the local-2 divergence as follows.
			\begin{theorem}[Upper bound of the local 2 divergence]
				\label{thm:local2}
				Suppose that $\trM$ is reversible and lazy, and let $\pi$ be the stationary distribution of $P$.
				Then, it holds that
				\begin{eqnarray*}
					\Psi_2(P)
					&\leq &\sqrt{ \frac{2\max_{w\in V}\pi_{w}}{\min_{(v,u)\in \VS\times \VS:P_{v,u}>0}\pi_v\trM_{v,u}}}.
				\end{eqnarray*}
			\end{theorem}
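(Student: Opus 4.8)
The plan is to fix a target vertex $w\in V$ and analyse the time evolution of the single function $h_t(v)\defeq P^t_{v,w}$ on $V$, working in the weighted space $\ell^2(\pi)$ with inner product $\langle f,g\rangle_\pi\defeq\sum_{v\in V}\pi_v f(v)g(v)$; reversibility of $P$ is exactly self-adjointness of $P$ in this inner product. The inner summand of $\Psi_2(P)^2$ at time $t$, namely $\sum_{(v,u):P_{v,u}>0}(P^t_{v,w}-P^t_{u,w})^2$, differs only by the weights $\pi_vP_{v,u}$ from twice the Dirichlet form $\mathcal{E}(h_t,h_t)\defeq\tfrac12\sum_{v,u}\pi_vP_{v,u}(h_t(v)-h_t(u))^2=\langle h_t,(I-P)h_t\rangle_\pi$. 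Writing $c\defeq\min_{(v,u):P_{v,u}>0}\pi_vP_{v,u}$ and bounding each weight below by $c$ gives, for every $t$,
\begin{eqnarray*}
\sum_{(v,u):P_{v,u}>0}\bigl(P^t_{v,w}-P^t_{u,w}\bigr)^2\ \leq\ \frac{2}{c}\,\mathcal{E}(h_t,h_t).
\end{eqnarray*}
This is the transformation of the local-$2$ divergence into Dirichlet-form language supplied by Lemma~\ref{lemm:dirichlet}.

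Next I would evaluate $\mathcal{E}(h_t,h_t)$ in closed form. With $e_w$ the indicator function of $w$ we have $h_t=P^te_w$, and moving both copies of $P^t$ across the inner product by self-adjointness gives
\begin{eqnarray*}
\mathcal{E}(h_t,h_t)=\langle P^te_w,(I-P)P^te_w\rangle_\pi=\langle e_w,(P^{2t}-P^{2t+1})e_w\rangle_\pi=\pi_w\bigl(P^{2t}_{w,w}-P^{2t+1}_{w,w}\bigr).
\end{eqnarray*}
Summing the previous display over $t\geq 0$ and factoring out $\pi_w$ therefore reduces the entire estimate to controlling the series of diagonal gaps $\sum_{t\geq0}\bigl(P^{2t}_{w,w}-P^{2t+1}_{w,w}\bigr)$.

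This is where laziness is used. Since $P$ is lazy ($P_{v,v}\geq\tfrac12$) and reversible, its spectrum lies in $[0,1]$, so the diagonal sequence $a_k\defeq P^k_{w,w}$ is monotone (each spectral contribution $\lambda_i^k$ is non-increasing), and $a_k\to\pi_w$ by ergodicity. Monotonicity gives $0\leq a_{2t}-a_{2t+1}\leq a_{2t}-a_{2t+2}$, so the gap series is dominated by a fully telescoping one:
\begin{eqnarray*}
\sum_{t=0}^{\infty}\bigl(a_{2t}-a_{2t+1}\bigr)\ \leq\ \sum_{t=0}^{\infty}\bigl(a_{2t}-a_{2t+2}\bigr)\ =\ a_0-\lim_{t\to\infty}a_{2t}\ =\ 1-\pi_w\ \leq\ 1.
\end{eqnarray*}
Chaining the three displays yields $\sum_{t\geq0}\sum_{(v,u):P_{v,u}>0}(P^t_{v,w}-P^t_{u,w})^2\leq 2\pi_w/c$ for each $w$; maximising over $w$ and taking a square root gives precisely $\Psi_2(P)\leq\sqrt{2\max_w\pi_w/c}$.

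The two Dirichlet-form identities are routine and I would simply state and check them. The step I expect to be the crux is the telescoping: one cannot telescope $\sum_t(a_{2t}-a_{2t+1})$ directly because the odd diagonal terms are omitted, and this is exactly why laziness is indispensable---it forces $a_{2t+1}\geq a_{2t+2}$ (non-negativity of the spectrum) so that the alternating gaps can be dominated by the monotone even-indexed telescope, whereas a chain with negative eigenvalues could make $a_k$ oscillate and break the bound.
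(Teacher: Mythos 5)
Your proposal is correct and follows essentially the same route as the paper: the same lower-bounding of the edge weights by $\min_{(v,u):P_{v,u}>0}\pi_vP_{v,u}$ to introduce the Dirichlet form, the same key identity ${\cal E}(P^t_{\cdot,w})=\pi_w\bigl(P^{2t}_{w,w}-P^{2t+1}_{w,w}\bigr)$ (which is exactly Lemma~\ref{lemm:dirichlet}), and the same domination of the gap series by the even-indexed telescope using monotonicity of $P^t_{w,w}$ under laziness. The only differences are presentational: you derive the identity via self-adjointness of $P$ in $\ell^2(\pi)$ rather than by expanding the square, and you justify the monotonicity spectrally rather than citing Proposition 10.25 of \cite{LPW08}.
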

			Note that if $\trM$ is symmetric, Theorem~\ref{thm:local2} becomes simpler since $\pi_v=1/\Nov$ for any $v\in \VS$. 
			\begin{corollary}\label{cor:local2}
			Suppose that $\trM$ is symmetric and lazy.
			Then, it holds that
			\begin{eqnarray*}
			\Psi_2(P)	&\leq & \sqrt{\frac{2}{\min_{(v,u)\in \VS\times \VS:P_{v,u}>0}\trM_{v,u}}}.
			\label{thm:local2sym}
			\end{eqnarray*}
			\end{corollary}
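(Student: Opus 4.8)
The plan is to obtain Corollary~\ref{cor:local2} as the symmetric specialization of Theorem~\ref{thm:local2}, so I will first sketch a proof of the latter and then simply read off the Corollary. Since $\trM$ is symmetric it is doubly stochastic, and because it is lazy (hence aperiodic) and the graph is connected (hence $\trM$ is irreducible), its unique stationary distribution is uniform: $\pi_v = 1/\Nov$ for all $v\in\VS$. Substituting $\max_{w}\pi_w = 1/\Nov$ and $\min_{(v,u):\trM_{v,u}>0}\pi_v\trM_{v,u} = (1/\Nov)\min_{(v,u):\trM_{v,u}>0}\trM_{v,u}$ into the bound of Theorem~\ref{thm:local2} makes the two factors of $1/\Nov$ cancel and yields exactly $\Psi_2(\trM)\le\sqrt{2/\min_{(v,u):\trM_{v,u}>0}\trM_{v,u}}$. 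Thus the entire content lies in Theorem~\ref{thm:local2}.

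For Theorem~\ref{thm:local2}, I would fix the maximizing vertex $w$ in Definition~\ref{def:local2} and set $h^{(t)}(v)\defeq\trM^t_{v,w}$, so that $h^{(t)}=\trM^t\mathbf{1}_w$. The key idea is to recognise the inner sum $\sum_{(v,u):\trM_{v,u}>0}(h^{(t)}(v)-h^{(t)}(u))^2$ as a Dirichlet form with the reversible weights stripped off, and to reinstate them at the cost of the worst-case weight: bounding each unit-weighted square by $\tfrac{1}{\min_{(v,u)}\pi_v\trM_{v,u}}\,\pi_v\trM_{v,u}(h^{(t)}(v)-h^{(t)}(u))^2$ turns the inner sum into $\tfrac{2}{\min_{(v,u)}\pi_v\trM_{v,u}}\,\mathcal{E}(h^{(t)})$, where $\mathcal{E}(\cdot)$ is the Dirichlet form of the reversible chain $\trM$; this is where Lemma~\ref{lemm:dirichlet} enters. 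Using reversibility (self-adjointness of $\trM$ with respect to $\pi$) one then collapses $\mathcal{E}(h^{(t)})=\langle h^{(t)},(I-\trM)h^{(t)}\rangle_\pi$ to the clean scalar $\pi_w\bigl(\trM^{2t}_{w,w}-\trM^{2t+1}_{w,w}\bigr)$.

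It then remains to sum over $t$, and here the monotonicity of the lazy return probabilities does the work: laziness forces the spectrum of $\trM$ into $[0,1]$, so $\trM^t_{w,w}$ is non-increasing in $t$ and converges to $\pi_w$; the nonnegative even–odd differences therefore telescope, giving $\sum_{t\ge 0}(\trM^{2t}_{w,w}-\trM^{2t+1}_{w,w})\le\trM^0_{w,w}-\pi_w = 1-\pi_w\le 1$. Combining the three steps gives $\Psi_2(\trM)^2\le\max_w\tfrac{2\pi_w}{\min_{(v,u)}\pi_v\trM_{v,u}}\le\tfrac{2\max_w\pi_w}{\min_{(v,u)}\pi_v\trM_{v,u}}$, which is Theorem~\ref{thm:local2}. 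I expect the main obstacle to be the middle step: exhibiting the identity $\mathcal{E}(h^{(t)})=\pi_w(\trM^{2t}_{w,w}-\trM^{2t+1}_{w,w})$ and making sure that full reversibility (not merely symmetry) is invoked correctly, since it is precisely this Dirichlet-form reformulation that extends the earlier $\Order(\sqrt{cd_{\max}})$ bound to arbitrary reversible round matrices such as the Metropolis chain while keeping the final constant small.
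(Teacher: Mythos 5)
Your proposal is correct and follows essentially the same route as the paper: the Corollary is obtained exactly as you describe, by substituting the uniform stationary distribution of a symmetric (doubly stochastic, irreducible) round matrix into Theorem~\ref{thm:local2} and cancelling the two factors of $1/\Nov$. Your sketch of Theorem~\ref{thm:local2} also mirrors the paper's proof step for step --- reweighting the inner sum by $\pi_v\trM_{v,u}$ against its minimum to produce the Dirichlet form, invoking Lemma~\ref{lemm:dirichlet} for the identity ${\cal E}(\trM^t_{\cdot,w})=\pi_w(\trM^{2t}_{w,w}-\trM^{2t+1}_{w,w})$, and telescoping the nonnegative differences via the monotonicity $\trM^{t}_{w,w}\geq \trM^{t+1}_{w,w}$ of lazy chains.
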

			Corollary~\ref{cor:local2} allows us to estimate an upper bound of the (lazy) Metropolis chain $\trMM$, 
			$\Psi_2(\trMM)=\Order (\sqrt{d_{\max}})$.
			We believe that our simple transformation of the local-2 divergence (Lemma~\ref{lemm:dirichlet}) has applications in other problems.
			
\if0
			\begin{table}[ht]
				\begin{tabular}{lllllll}
					\bhline{1.5pt}
					Algorithm&$t$&Disc. after $t$&$\Psi_p(P_{\rm M})$&NL&D/R&Ref. \\ \hline
					{\sc DSend}$(\lceil \conf^{(t)}_vP_{v,u}\rceil$ or $\lfloor \conf^{(t)}_vP_{v,u} \rceil)$& $\Order(T)$&$\Order\left(\Psi_1(P)\right)$&$\Order(d_{\max}\frac{\log \Nov }{1-\lambda})$&\cmark&D&\cite{RSW98}\\ \hline
					Algorithm 2& $\Order(T)$&$\Order \left(\Psi_2(P)\sqrt{\log \Nov}\right)$& $\Order(\sqrt{d_{\max}})$&\cmark& R& Thm.~\ref{thm:main}\\ \bhline{1.5pt} 
				\end{tabular}
				\caption{Previous works of natural diffusion-based algorithms in general case}
			\end{table}
\fi
		\if0\noindent \textbf{Organization of this paper: }
		This paper is organized as follows.
		In Section~\ref{sec:preliminaries}, we give a precise discussion of the continuous diffusion and some basic and important properties of our algorithms to prove the main theorem.
		In Section~\ref{sec:mainthm}, we proves Theorem~\ref{thm:main}.
		Section~\ref{sec:local2} is concerned with the upper bound of the local-2 divergence.
		In Section~\ref{sec:final}, we conclude the proof of Theorem~\ref{thm:reg} and Theorem~\ref{thm:gen}.\fi

\section{Continuous diffusion and Markov chains}\label{sec:continuous}
	%
	%
		Before describing our algorithms, we give a precise discussion of the {\em continuous} diffusion algorithm according to the round matrix $P$. 
		Let $\cconf^{(0)}\in \mathbb{R}^n_{\geq 0}$ be a initial configuration of $\Not$ $(=\sum_{v\in V}\cconf^{(0)}_v)$ loads over a vertex set $V$, 
		and let $\cconf^{(t)}\in \mathbb{R}^n_{\geq 0}$ denotes the configuration of the loads at time $t$ on the diffusion algorithm according to the round matrix $P$.
		Then, for each $t$ and $v\in V$, $\cconf^{(t+1)}_v$ is defined by $\cconf^{(t)}$ and the round matrix $P$, 
		$
			(\cconf^{(t+1)})_v \defeq (\cconf^{(t)}\trM)_v\ =\ \sum_{u\in V}\cconf^{(t)}_u\trM_{u,v}.
		$
		From this definition, the configuration of loads at time $t$ is described by the initial configuration of loads and the round matrix as follows.
		\begin{eqnarray}
			\cconf^{(t)}&=&\cconf^{(0)}\trM^t.
		\end{eqnarray}
		Our main concern is {\em the discrepancy} $\Disc(\cconf^{(t)})$ defined in \eqref{def:disc}.
		The limit behavior of the discrepancy is characterized by the rich theory of the convergence of the Markov chains.
		To discuss clearly, we introduce some terminologies. 
		A $P$ is called {\em irreducible} if for any $v,u\in V$ there exists a $t$ such that $P^t_{v,u}>0$. 
		Note that $P$ is irreducible if and only if the transition diagram ${\cal G}=(V, \trE)$ is connected, where $\trE\defeq \{(v,u)\in V\times V\mid P_{v,u}>0\}$.
		A $P$ is called {\em symmetric} if $P_{v,u}=P_{u,v}$ holds for any $v,u\in V$.
		Let $\lambda_1, \lambda_2,\ldots, \lambda_\Nov$ be the eigenvalues of $P$. We assume $|\lambda_1|\geq |\lambda_2|\geq \cdots \geq |\lambda_\Nov|$.
		Then, it is easy to derive the following proposition stating the convergence time of the diffusion algorithm according to $P$.
		We put the proof in Appendix~\ref{sec:continuous2}.
		\begin{proposition}[The discrepancy of the continuous diffusion algorithm \cite{SS94, RSW98}]
		\label{prop:cont}
			Suppose that $P$ is irreducible and symmetric.
			Then, for any $\cconf^{(0)}\in \mathbb{R}^n_{\geq 0}$, $w\in V$, $\varepsilon\in (0,1)$ and 
			$
				T \geq \frac{1}{1-\sE}\log\left(\frac{4\Disc(\cconf^{(0)}) \Nov}{\varepsilon}\right), 
			$
			$
				\Disc(\cconf^{(T)})= \Disc(\cconf^{(0)} P^T)\leq \varepsilon
			$ holds.
		\end{proposition}
		Note that combining Proposition~\ref{prop:cont}, Theorem~\ref{thm:main} (Discrepancy between discrete and continuous diffusions) 
		and Theorem~\ref{thm:local2} (Upper bound of the local 2 divergence), 
		it is easy to obtain our main Theorems~\ref{thm:reg} (Result on regular graphs) and ~\ref{thm:gen} (Results on arbitrary graphs). 
		Thus, we start proving Theorem~\ref{thm:main} (Section~\ref{sec:mainthm}) and Theorem~\ref{thm:local2} (Section~\ref{sec:local2}).
			\if0
			\begin{proof}
			We have
			\begin{eqnarray*}
				\left| (\cconf^{(0)}P^T)_w -\frac{\Not}{\Nov} \right|
				&=&\left| \sum_{v\in V}\cconf^{(0)}_v\left(\trM^T_{v,w}-\frac{1}{\Nov}\right) \right| 
				\ =\ \left| \sum_{v\in V}\cconf^{(0)}_v\left(\trM^T_{w,v}-\frac{1}{\Nov}\right) \right| \\
				&=&\left| \sum_{v\in V}(\cconf^{(0)}_v-\cconf^{(0)}_x)\left(\trM^T_{w,v}-\frac{1}{\Nov}\right) \right| 
				\ \leq \ 2\Disc(\cconf^{(0)}) \cdotp \frac{1}{2}\sum_{v\in V}\left|\trM^{T}_{w,v}-\frac{1}{\Nov}\right|.
			\end{eqnarray*}	
			Note that we use the assumption of the symmetry of $\trM$.
			Using Theorem 12.4 in \cite{LPW08}, 
			we have $\frac{1}{2}\sum_{v\in V}\left|\trM^{T}_{w,v}-\frac{1}{\Nov}\right|\leq \varepsilon'$ $T\geq \frac{1}{1-\sE} \log \left(\frac{\Nov }{\varepsilon'}\right)$.
			Thus, with the assumption of our $T$, we have $\frac{1}{2}\sum_{v\in V}\left|\trM^{T}_{w,v}-\frac{1}{\Nov}\right|\leq \frac{\varepsilon}{4\Disc(\cconf^{(0)})}$.
			Since for any $v,u\in V$, we have 
			$
				\left| (\cconf^{(0)}P^T)_v -(\cconf^{(0)}P^T)_u \right|
				=\left| \left((\cconf^{(0)}P^T)_v-\frac{\Not}{\Nov}\right) - \left(\frac{\Not}{n}-(\cconf^{(0)}P^T)_u\right) \right|
				\leq \left| (\cconf^{(0)}P^T)_v -\frac{\Not}{\Nov} \right|+\left| (\cconf^{(0)}P^T)_u -\frac{\Not}{\Nov} \right|, 
			$
			we obtain the claim.
			\end{proof}
			\fi
\if0
\subsection{Upper bound on the discrepancy for symmetric round matrices}\label{sec:final}
	Combining Proposition~\ref{prop:cont} and Theorem~\ref{thm:main}, we can show the following general theorem according to the discrepancy for any symmetric round matrix.
	\begin{theorem}[Result for any symmetric matrices]
		\label{thm:maindisc}
		Suppose that $P$ is irreducible and symmetric.
		Then, from any initial configuration of loads $\conf^{(0)}$ and for each $T\geq \frac{\log(4\Disc(\conf^{(0)}) \Nov)}{1-\sE}$, 
		the configuration of loads at time $T$ $\conf^{(T)}$ of Algorithm~2 satisfies that 
		$
			\Pro\left[ \Disc(\conf^{(T)}) \leq 8\Psi_2(\trM)\sqrt{\log \Nov} \right] \geq 1-\frac{2}{\Nov}.
		$
	\end{theorem}
	\begin{proof}
		Since $\conf^{(T)}_v-\conf^{(T)}_u=(\conf^{(T)}_v-(\conf^{(0)}P)_v)+((\conf^{(0)}P)_v-(\conf^{(0)}P)_u)+((\conf^{(0)}P)_u-\conf^{(T)}_u)$, we have
		$
			\left|\conf^{(T)}_v-\conf^{(T)}_u\right|\leq 
			\left|\conf^{(T)}_v-(\conf^{(0)}P)_v\right|+\left|(\conf^{(0)}P)_v-(\conf^{(0)}P)_u\right|+\left|(\conf^{(0)}P)_u)-\conf^{(T)}_u\right|
			\leq 2\Disc(\conf^{(T)})+\left|(\conf^{(0)}P)_v-(\conf^{(0)}P)_u\right|.
		$
		Combining Proposition~\ref{prop:cont} and Theorem~\ref{thm:main} we obtain the claim.
	\end{proof}
	Now, by using Theorems~\ref{thm:maindisc} and~\ref{thm:local2}, we obtain the proof of Theorems~\ref{thm:reg} and~\ref{thm:gen}.
	\begin{proof}[Proof of Theorem~\ref{thm:reg}]
		Since Algorithm~1 is Algorithm~2 according to the 
		$P$ such that $P_{v,u}=1/(2d)$ for any $\{v,u\}\in E$, $P_{v,v}=1/2$ for any $v\in V$, and $P_{v,u}=0$ for any $\{v,u\}\notin E$.
		For this $P$ since 
		$\min_{(v,u)\in V\times V:P_{v,u}>0}\pi_vP_{v,u}=1/(2nd)$, 
		$\Psi_2(P)\leq 2\sqrt{d}$ holds from \eqref{thm:local2sym}. Thus we obtain the claim.
	\end{proof}
	\begin{proof}[Proof of Theorem~\ref{thm:gen}]
		From the definition of $\trMM$ in Section~\ref{sec:genresult}, this chain is lazy and symmetric.
		Since 
		$\min_{(v,u)\in V\times V:P_{v,u}>0}\pi_vP_{v,u}=1/(2nd_{\max})$,
		$\Psi_2(P)\leq 2\sqrt{d_{\max}}$ holds from \eqref{thm:local2sym}. Thus we obtain the claim.
	\end{proof}
\fi
\if0
	\subsection{Key properties of our model}
		In this section, we observe key properties of our model (Algorithm~2).
		The properties are described by Observations~\ref{obs:defconf},~\ref{obs:exD} and \ref{obs:indepD}.
		Each of them is concerned with the random variable $\rvD^{(t)}_v(k)$, 
		which denotes the {\em destination} of $(k+1)$-th token on $v\in V$ at $t\geq 0$. 
		For the notational convenience, we define $\IP_{v,u}$, which denotes the interval of $\trM_{v,u}$ on $[0,1)$.
		For any $v\in V$ and $u=v_i\in \trN_v$, let 
		\begin{eqnarray}
		\label{def:Pint}
			\IP_{v,u}&\defeq &\textstyle{\left[\sum_{j=0}^{i-1}P_{v,v_j}, \sum_{j=0}^{i}P_{v,v_j}\right)}.
		\end{eqnarray}
		We assume $\sum_{j=0}^{-1}P_{v,v_j}\defeq 0$. Note that the length of $\IP_{v,u}$ is equal to $P_{v,u}$.

		Recall that $r^{(t)}_v(k)$ is a random number sampled from $[k/\conf^{(t)}_v, (k+1)/\conf^{(t)}_v)$, 
		where $\conf^{(t)}$ denotes the configuration of loads at time $t$ in Algorithm~2 (see the definition of Algorithm~2 in Section~\ref{sec:genresult}).
		Then, $\rvD^{(t)}_v(k)$ is defined as follows. 
		\begin{eqnarray}
			\label{def:rvD}
			\rvD^{(t)}_v(k)=u\ \ \mathrm{if}\ \ \rvr_v^{(t)}(k)\in \IP_{v,u}.
		\end{eqnarray}
		Now, let we observe the following $3$ properties of $\rvD^{(t)}_v(k)$.
		First one is concerned with the connection between the configuration of loads and $\rvD^{(t)}_v(k)$.
		Note that $\sum_{k=0}^{\conf^{(t)}_v-1}\df<\rvD^{(t)}_v(k)=u>$ denotes the number of loads sent from $v$ to $u$ at time $t$ in Algorithm~2.
		\begin{observation}[Relation between the configuration and destinations]
			For each time step $t\geq 0$, the configuration of loads at time $t+1$ in Algorithm~2 is determined by $\conf^{(t)}$ and 
			$\left \{\rvD^{(t)}_v(0), \rvD^{(t)}_v(1), \ldots, \rvD^{(t)}_v(\conf^{(t)}_v-1)\right \}_{v\in V}$, i.e. for each time $t\geq 0$ and $u\in V$, 
			\begin{eqnarray}
				\conf^{(t+1)}_u&= &\sum_{v\in V}\sum_{k=0}^{\conf^{(t)}_v-1}\df<\rvD^{(t)}_v(k)=u>.
			\end{eqnarray}
			\label{obs:defconf}
		\end{observation}
		Next one describe the expected value of $\rvD^{(t)}_v(k)$ conditioned on $\conf^{(t)}_v$.
		\begin{observation}[Expectation of the destination]
			For any $t\geq 0$, $v\in V$, $k\in \{0,1,\ldots \conf^{(t)}_v-1\}$ and $u\in \trN_v$, it holds that
			\begin{eqnarray*}
				\E\left[\df<\rvD^{(t)}_v(k)=u>\mid \conf^{(t)}_v\right]
				&=&\left|\left[\frac{k}{\conf^{(t)}_v}, \frac{k+1}{\conf^{(t)}_v}\right) \cap \IP_{v,u} \right|\cdot \conf^{(t)}_v.
			\end{eqnarray*}
			\label{obs:exD}
		\end{observation}
		\begin{proof}
			\begin{eqnarray*}
				\E\left[\df<\rvD^{(t)}_v(k)=u>\mid \conf^{(t)}_v\right]
				&=&\Pro\left[\rvD^{(t)}_v(k)=u\mid \conf^{(t)}_v\right]
				\ =\ \Pro\left[\rvr^{(t)}_v(k) \in \IP_{v,u} \mid \conf^{(t)}_v\right]
			\end{eqnarray*}
			holds and we obtain the claim since $\rvr^{(t)}_v(k)$ is randomly sampled from $\left[ \frac{k}{\conf^{(t)}_v},  \frac{k+1}{\conf^{(t)}_v} \right)$.
		\end{proof}
		The last one shows that the independency of each destination in Algorithm~2.
		\begin{observation}[Independency of the destinations]
			Suppose that $\conf^{(t)}$ is fixed. 
			Then, for any $v,v'\in V$, $k\in \{0,1,\ldots \conf^{(t)}_v-1\}$ and $k' \in \{0,1,\ldots \conf^{(t)}_{v'}-1\}$, 
			$\rvD^{(t)}_v(k)$ and $\rvD^{(t)}_{v'}(k')$ are independent if $v\neq v'$ or $k\neq k'$.
			\label{obs:indepD}
		\end{observation}
		\begin{proof}
		For any intervals $[a,b)\subseteq [0,1)$, $[a',b')\subseteq [0,1)$ and $v, v', k, k'$ s.t. $v\neq v'$ or $k\neq k'$, 
		\begin{eqnarray}
		\lefteqn{\Pro\left[r^{(t)}_v(k)\in [a,b)\cap r^{(t)}_{v'}(k')\in [a',b')\mid \conf^{(t)}\right]} \nonumber \\
		&=&\Pro\left[r^{(t)}_v(k)\in [a,b)\mid \conf^{(t)}\right]\cdotp \Pro\left[r^{(t)}_{v'}(k')\in [a',b')\mid \conf^{(t)}\right]
		\end{eqnarray}
		holds since $r^{(t)}_v(k)$ and $r^{(t)}_{v'}(k')$ are randomly sampled from each fixed interval independently. Thus we obtain the claim.
		\end{proof}
		The analysis of this paper is derived from the properties described in Observations~\ref{obs:defconf}-\ref{obs:indepD}.
		For example, the expected value of the configuration of loads is derived from Observations~\ref{obs:defconf} and Observations~\ref{obs:exD}.
		\begin{lemma}[Expectation of the configuration]
			\label{lemm:exconf}
			For any $T\geq 0$ and initial configuration $\conf^{(0)}$, it holds that
			$
				\E\left[\conf^{(T)}\right]=\conf^{(0)}P^T.
			$
		\end{lemma}
		\begin{proof}
			Since
			\begin{eqnarray*}
				\lefteqn{
				\E\left[\conf^{(t+1)}_u\mid \conf^{(t)}\right]
				\ =\ \sum_{v\in V}\sum_{k=0}^{\conf^{(t)}_v-1}\E\left[\df<\rvD^{(t)}_v(k)=u>\mid \conf^{(t)}\right]}\\
				&=&\sum_{v\in V}\sum_{k=0}^{\conf^{(t)}_v-1}\left|\left[\frac{k}{\conf^{(t)}_v}, \frac{k+1}{\conf^{(t)}_v}\right) \cap \IP_{v,u} \right|\cdot \conf^{(t)}_v
				\ =\ \sum_{v\in V}\conf^{(t)}_vP_{v,u}
				\ =\ (\conf^{(t)}P)_u,
			\end{eqnarray*}
			it holds that
			$
				\E\left[\conf^{(t+1)}\right]
				\ =\ \E\left[\E\left[\conf^{(t+1)}\mid \conf^{(t)}\right]\right]
				\ =\ \E\left[\conf^{(t)}\right]P.
			$
			Thus we obtain the claim.
		\end{proof}
		At the end of this section, we introduce the following $2$ lemmas, each of which guarantees an upper bound of the discrepancy between $\df<\rvD^{(t)}_v(k)=u>$ and its expectation.
		\begin{lemma}[Discrepancy around one load]
			\label{lemm:discload}
			For any $t\geq 0$, $v\in V$ and $k\in \{0,1,\ldots \conf^{(t)}_v-1\}$, it holds that
			\begin{eqnarray*}
				\sum_{u\in \trN_v}\left| \df<\rvD^{(t)}_v(k)=u>-\E\left[\df<\rvD^{(t)}_v(k)=u>\mid \conf^{(t)}_v\right] \right|&\leq & 2.
			\end{eqnarray*}
		\end{lemma}
		\begin{proof}	
		Since 
		$
			\sum_{u\in \trN_v}\left| \df<\rvD^{(t)}_v(k)=u>-\E[\df<\rvD^{(t)}_v(k)=u>\mid \conf^{(t)}_v] \right|
			\leq  \sum_{u\in \trN_v}\left| \df<\rvD^{(t)}_v(k)=u>\right|+\sum_{u\in \trN_v}\left| \E[\df<\rvD^{(t)}_v(k)=u>\mid \conf^{(t)}_v] \right|
		$
		, $\sum_{u\in \trN_v}\df<\rvD^{(t)}_v(k)=u>=1$ and $\sum_{u\in \trN_v}\Pro[\rvD^{(t)}_v(k)=u]=1$ from the definition \eqref{def:rvD} (note that $\E[\df<\rvD^{(t)}_v(k)=u>\mid \conf^{(t)}_v]=\Pro[\rvD^{(t)}_v(k)=u]$), we obtain the claim.
		\end{proof}
		\begin{lemma}[Discrepancy around one neighbor]
			\label{lemm:discneigh}
			For any $t\geq 0$, $v\in V$ and~$u\in \trN_v$, it holds that
			\begin{eqnarray*}
				\sum_{k=0}^{\conf^{(t)}_v-1}\left| \df<\rvD^{(t)}_v(k)=u>-\E\left[\df<\rvD^{(t)}_v(k)=u>\mid \conf^{(t)}_v\right] \right|&\leq & 2.
			\end{eqnarray*}
		\end{lemma}
		\begin{proof} 
		Let $\IP_{v,u}=[a_u,b_u)$. Then let 
		$
		A^{(t)}_v(u)\defeq \left \{k\in \{0,\ldots, \conf^{(t)}_v-1\}\mid \frac{k}{\conf^{(t)}_v}<a_u<\frac{k+1}{\conf^{(t)}_v} \right \}$ and $
		B^{(t)}_v(u)\defeq \left \{k\in \{0,\ldots, \conf^{(t)}_v-1\}\mid \frac{k}{\conf^{(t)}_v}<b_u<\frac{k+1}{\conf^{(t)}_v} \right \}.
		$
		Note that both $|A^{(t)}_v(u)|\leq 1$ and $|B^{(t)}_v(u)|\leq 1$ hold 
		since $\left[\frac{k}{\conf^{(t)}_v},\frac{k+1}{\conf^{(t)}_v}\right)$ and $\left[\frac{k'}{\conf^{(t)}_v},\frac{k'+1}{\conf^{(t)}_v}\right)$ are disjoint for any $k\neq k'$.

		Now, we observe that if $k\notin \bigl(A^{(t)}_v(u)\cup B^{(t)}_v(u)\bigr)$, then {\bf 1.} $\left[\frac{k}{\conf^{(t)}_v},\frac{k+1}{\conf^{(t)}_v}\right)\subseteq \IP_{v,u}$ 
		or {\bf 2.} $\left[\frac{k}{\conf^{(t)}_v},\frac{k+1}{\conf^{(t)}_v}\right) \cap \IP_{v,u}=0$.
		For the case {\bf 1.}, 
		$\Pro[\rvD^{(t)}_v(k)=u\mid \conf^{(t)}_v]=1$ from the Observation~\ref{obs:exD}. Then, $\df<\rvD^{(t)}_v(k)=u>=1$ since $\rvD^{(t)}_v(k)$ must be $u$.
		For the case {\bf 2.}, 
		$\Pro[\rvD^{(t)}_v(k)=u\mid \conf^{(t)}_v]=0$ from the Proposition~\ref{obs:exD}. Then, $\df<\rvD^{(t)}_v(k)=u>=0$ since $\rvD^{(t)}_v(k)$ must not be $u$.
		Hence $\df<\rvD^{(t)}_v(k)=u>-\E[\df<\rvD^{(t)}_v(k)=u>\mid \conf^{(t)}_v]=0$ for both cases. Thus
		\begin{eqnarray*}
			\lefteqn{\sum_{k=0}^{\conf^{(t)}_v-1}\left| \df<\rvD^{(t)}_v(k)=u>-\E\left[\df<\rvD^{(t)}_v(k)=u>\mid \conf^{(t)}_v\right] \right|}\nonumber \\
			&=&\sum_{\substack{k\in \\ A^{(t)}_v(u)\cup B^{(t)}_v(u)}}\left| \df<\rvD^{(t)}_v(k)=u>-\E\left[\df<\rvD^{(t)}_v(k)=u>\mid \conf^{(t)}_v\right] \right| 
			\leq |A^{(t)}_v(u)\cup B^{(t)}_v(u)|
			\leq 2.
		\end{eqnarray*}
		\end{proof}
\fi
\section{Proof of Theorem~\ref{thm:main}}\label{sec:mainthm}
	\subsection{Key properties of our model}
		In this section, we observe key properties of our model (Algorithm~2).
		The properties are described by Observations~\ref{obs:defconf},~\ref{obs:exD} and \ref{obs:indepD}.
		Each of them is concerned with the random variable $\rvD^{(t)}_v(k)$, 
		which denotes the {\em destination} of $(k+1)$-th token on $v\in V$ at $t\geq 0$. 
		For the notational convenience, we define $\IP_{v,u}$, which denotes the interval of $\trM_{v,u}$ on $[0,1)$.
		For any $v\in V$ and $u=v_i\in \trN_v$, let 
		\begin{eqnarray}
		\label{def:Pint}
			\IP_{v,u}&\defeq &\left[\sum_{j=0}^{i-1}P_{v,v_j}, \sum_{j=0}^{i}P_{v,v_j}\right).
		\end{eqnarray}
		We assume $\sum_{j=0}^{-1}P_{v,v_j}\defeq 0$. Note that the length of $\IP_{v,u}$ is equal to $P_{v,u}$.

		Recall that $r^{(t)}_v(k)$ is a random number sampled from $[k/\conf^{(t)}_v, (k+1)/\conf^{(t)}_v)$, 
		where $\conf^{(t)}$ denotes the configuration of loads at time $t$ in Algorithm~2 (see the definition of Algorithm~2 in Section~\ref{sec:genresult}).
		Then, $\rvD^{(t)}_v(k)$ is defined as follows. 
		\begin{eqnarray}
			\label{def:rvD}
			\rvD^{(t)}_v(k)=u\ \ \mathrm{if}\ \ \rvr_v^{(t)}(k)\in \IP_{v,u}.
		\end{eqnarray}
		Now, let we observe the following $3$ properties of $\rvD^{(t)}_v(k)$.
		First one is concerned with the connection between the configuration of loads and $\rvD^{(t)}_v(k)$.
		Note that $\sum_{k=0}^{\conf^{(t)}_v-1}\df<\rvD^{(t)}_v(k)=u>$ denotes the number of loads sent from $v$ to $u$ at time $t$ in Algorithm~2.
		\begin{observation}[Relation between the configuration and destinations]
			For each time step $t\geq 0$, $\conf^{(t+1)}$ of Algorithm~2 is determined by $\conf^{(t)}$ and 
			$\left \{\rvD^{(t)}_v(0), \rvD^{(t)}_v(1), \ldots, \rvD^{(t)}_v(\conf^{(t)}_v-1)\right \}_{v\in V}$, i.e. for each time $t\geq 0$ and $u\in V$, 
			\begin{eqnarray}
				\conf^{(t+1)}_u&= &\sum_{v\in V}\sum_{k=0}^{\conf^{(t)}_v-1}\df<\rvD^{(t)}_v(k)=u>.
			\end{eqnarray}
			\label{obs:defconf}
		\end{observation}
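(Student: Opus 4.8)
The plan is to recognize that this is a conservation-of-loads bookkeeping identity, so the proof reduces to unwinding the definition of Algorithm~2 together with the definition \eqref{def:rvD} of the destination variable $\rvD^{(t)}_v(k)$. First I would recall that in an update from $\conf^{(t)}$ to $\conf^{(t+1)}$, every one of the $\conf^{(t)}_v$ loads residing at a vertex $v$ at time $t$ moves to exactly one vertex: load $k$ samples $\rvr^{(t)}_v(k)\in\left[\frac{k}{\conf^{(t)}_v},\frac{k+1}{\conf^{(t)}_v}\right)$ and relocates to the unique $u$ with $\rvr^{(t)}_v(k)\in\IP_{v,u}$. By \eqref{def:rvD} this destination is precisely $\rvD^{(t)}_v(k)$, so the indicator $\df<\rvD^{(t)}_v(k)=u>$ equals $1$ exactly when load $k$ of $v$ relocates to $u$, and $0$ otherwise.

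Second, I would observe that $\conf^{(t+1)}_u$ is by definition the number of loads present at $u$ immediately after the update. Since no load is created or destroyed and each load ends up at its single destination, this count is obtained by summing the arrival indicators over all possible sources, namely over every vertex $v\in V$ and every load $k\in\{0,\ldots,\conf^{(t)}_v-1\}$ at $v$. This gives exactly
\[
	\conf^{(t+1)}_u=\sum_{v\in V}\sum_{k=0}^{\conf^{(t)}_v-1}\df<\rvD^{(t)}_v(k)=u>,
\]
which is the claimed formula. The ``determined by'' part of the statement is then immediate: once $\conf^{(t)}$ and the collection $\left\{\rvD^{(t)}_v(k)\right\}$ are fixed, the right-hand side is a deterministic function of them.

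There is no genuine obstacle here, as this is a definitional observation rather than a substantive lemma; the content lies entirely in setting up the correct correspondence between loads and their destinations. The only point that deserves a word of care is that the intervals $\left\{\IP_{v,u}\right\}_{u\in\trN_v}$ partition $\left[0,1\right)$: they are consecutive and their lengths sum to $\sum_{u\in\trN_v}\trM_{v,u}=1$. Consequently every sampled $\rvr^{(t)}_v(k)$ falls into exactly one $\IP_{v,u}$, so $\rvD^{(t)}_v(k)$ is well defined and single-valued, each load is counted on the right-hand side for precisely one target $u$, and the total mass $\Not$ is conserved across the update.
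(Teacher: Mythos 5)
Your proof is correct and matches the paper's intent: the paper states this as an observation with no written proof, treating it as immediate from the definition of Algorithm~2 and of $\rvD^{(t)}_v(k)$, which is exactly the definitional unwinding you carry out. Your added remark that the intervals $\IP_{v,u}$ partition $[0,1)$ (so each load has a unique destination and mass is conserved) is a sensible piece of due diligence that the paper leaves implicit.
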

		Next one describe the expected value of $\rvD^{(t)}_v(k)$ conditioned on $\conf^{(t)}_v$.
		\begin{observation}[Expectation of the destination]
			For any $t\geq 0$, $v\in V$, $k\in \{0,1,\ldots \conf^{(t)}_v-1\}$ and $u\in \trN_v$, it holds that
			\begin{eqnarray*}
				\E\left[\df<\rvD^{(t)}_v(k)=u>\mid \conf^{(t)}_v\right]
				&=&\left|\left[\frac{k}{\conf^{(t)}_v}, \frac{k+1}{\conf^{(t)}_v}\right) \cap \IP_{v,u} \right|\cdot \conf^{(t)}_v.
			\end{eqnarray*}
			\label{obs:exD}
		\end{observation}
		\begin{proof}
			\begin{eqnarray*}
				\E\left[\df<\rvD^{(t)}_v(k)=u>\mid \conf^{(t)}_v\right]
				&=&\Pro\left[\rvD^{(t)}_v(k)=u\mid \conf^{(t)}_v\right]
				\ =\ \Pro\left[\rvr^{(t)}_v(k) \in \IP_{v,u} \mid \conf^{(t)}_v\right]
			\end{eqnarray*}
			holds and we obtain the claim since $\rvr^{(t)}_v(k)$ is randomly sampled from $\left[ \frac{k}{\conf^{(t)}_v},  \frac{k+1}{\conf^{(t)}_v} \right)$.
		\end{proof}
		The last one shows that the conditional independency of each destination in Algorithm~2.
		\begin{observation}[Independency of the destinations]
			Suppose that $\conf^{(t)}$ is fixed. 
			Then, for any $v,v'\in V$, $k\in \{0,1,\ldots \conf^{(t)}_v-1\}$ and $k' \in \{0,1,\ldots \conf^{(t)}_{v'}-1\}$, 
			$\rvD^{(t)}_v(k)$ and $\rvD^{(t)}_{v'}(k')$ are independent if $v\neq v'$ or $k\neq k'$.
			\label{obs:indepD}
		\end{observation}
		\begin{proof}
		For any intervals $[a,b)\subseteq [0,1)$, $[a',b')\subseteq [0,1)$ and $v, v', k, k'$ s.t. $v\neq v'$ or $k\neq k'$, 
		\begin{eqnarray}
		\lefteqn{\Pro\left[\left(r^{(t)}_v(k)\in [a,b)\right)\ {\rm and}\ \left(r^{(t)}_{v'}(k')\in [a',b')\right)\mid \conf^{(t)}\right]} \nonumber \\
		&=&\Pro\left[r^{(t)}_v(k)\in [a,b)\mid \conf^{(t)}\right]\cdotp \Pro\left[r^{(t)}_{v'}(k')\in [a',b')\mid \conf^{(t)}\right]
		\end{eqnarray}
		since $r^{(t)}_v(k)$ and $r^{(t)}_{v'}(k')$ are randomly sampled from each fixed interval independently. 
		\end{proof}
		The analysis of this paper is derived from the properties described in Observations~\ref{obs:defconf}-\ref{obs:indepD}.
		For example, the expected value of the configuration of loads is derived from Observations~\ref{obs:defconf} and Observations~\ref{obs:exD}.
		\begin{lemma}[Expectation of the configuration]
			\label{lemm:exconf}
			For any $T\geq 0$ and initial configuration $\conf^{(0)}$, it holds that
			\begin{eqnarray*}
				\E\left[\conf^{(T)}\right]&=&\conf^{(0)}P^T.
			\end{eqnarray*}
		\end{lemma}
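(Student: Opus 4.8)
The plan is to prove the identity by induction on $T$, reducing everything to a single one-step computation of the conditional expectation $\E[\conf^{(t+1)} \mid \conf^{(t)}]$. The base case $T=0$ is immediate, since $\conf^{(0)}P^0=\conf^{(0)}$ holds deterministically. For the inductive step it suffices to establish the recursion $\E[\conf^{(t+1)}]=\E[\conf^{(t)}]P$; iterating this $T$ times and invoking the inductive hypothesis $\E[\conf^{(t)}]=\conf^{(0)}P^t$ then yields the claim.

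First I would fix $u\in V$ and condition on $\conf^{(t)}$, using Observation~\ref{obs:defconf} to write $\conf^{(t+1)}_u$ as the double sum $\sum_{v\in V}\sum_{k=0}^{\conf^{(t)}_v-1}\df<\rvD^{(t)}_v(k)=u>$ of indicator variables. By linearity of conditional expectation this gives
\begin{eqnarray*}
\E\left[\conf^{(t+1)}_u\mid \conf^{(t)}\right] &=& \sum_{v\in V}\sum_{k=0}^{\conf^{(t)}_v-1}\E\left[\df<\rvD^{(t)}_v(k)=u>\mid \conf^{(t)}_v\right],
\end{eqnarray*}
where I have used that each destination depends on $\conf^{(t)}$ only through the local count $\conf^{(t)}_v$. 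I would then substitute the exact value of each conditional expectation supplied by Observation~\ref{obs:exD}, namely $|[\frac{k}{\conf^{(t)}_v},\frac{k+1}{\conf^{(t)}_v})\cap \IP_{v,u}|\cdot \conf^{(t)}_v$.

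The key step is the inner sum over $k$. Since the intervals $[\frac{k}{\conf^{(t)}_v},\frac{k+1}{\conf^{(t)}_v})$ for $k=0,\ldots,\conf^{(t)}_v-1$ are pairwise disjoint and together partition $[0,1)$, their intersection lengths with $\IP_{v,u}$ sum to the full length $|\IP_{v,u}|=P_{v,u}$. Hence $\sum_{k=0}^{\conf^{(t)}_v-1}|[\frac{k}{\conf^{(t)}_v},\frac{k+1}{\conf^{(t)}_v})\cap \IP_{v,u}|\cdot \conf^{(t)}_v=\conf^{(t)}_v P_{v,u}$, so that $\E[\conf^{(t+1)}_u\mid \conf^{(t)}]=\sum_{v\in V}\conf^{(t)}_v P_{v,u}=(\conf^{(t)}P)_u$. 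Collecting these coordinatewise identities into the vector statement $\E[\conf^{(t+1)}\mid \conf^{(t)}]=\conf^{(t)}P$ and applying the tower property, $\E[\conf^{(t+1)}]=\E[\E[\conf^{(t+1)}\mid \conf^{(t)}]]=\E[\conf^{(t)}]P$, delivers exactly the desired recursion.

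I do not expect a genuine obstacle here: the only slightly delicate point is the bookkeeping in the inner sum, where one must observe that the partition property of the sampling intervals makes the $\conf^{(t)}_v$-dependence cancel into the clean factor $P_{v,u}$, uniformly in the current load count. Everything else is linearity of expectation, the law of total expectation, and a routine induction on $T$.
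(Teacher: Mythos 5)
Your proposal is correct and follows essentially the same route as the paper: both compute $\E[\conf^{(t+1)}_u\mid \conf^{(t)}]$ via Observations~\ref{obs:defconf} and~\ref{obs:exD}, use the fact that the sampling intervals partition $[0,1)$ so the intersection lengths sum to $P_{v,u}$, and then apply the tower property and iterate. The explicit induction framing is just a more formal packaging of the paper's ``iterating the equation'' step.
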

		\begin{proof}
			Since
			\begin{eqnarray*}
				\E\left[\conf^{(t+1)}_u\mid \conf^{(t)}\right]
				&=& \sum_{v\in V}\sum_{k=0}^{\conf^{(t)}_v-1}\E\left[\df<\rvD^{(t)}_v(k)=u>\mid \conf^{(t)}\right] \\
				&=& \sum_{v\in V}\sum_{k=0}^{\conf^{(t)}_v-1}\left|\left[\frac{k}{\conf^{(t)}_v}, \frac{k+1}{\conf^{(t)}_v}\right) \cap \IP_{v,u} \right|\cdot \conf^{(t)}_v
				\ =\ \sum_{v\in V}\conf^{(t)}_vP_{v,u}
				\ =\ (\conf^{(t)}P)_u
			\end{eqnarray*}
			holds from Observations~\ref{obs:defconf} and~\ref{obs:exD}, we have
			\begin{eqnarray}
				\E\left[\conf^{(t+1)}\right]
				\ =\ \E\left[\E\left[\conf^{(t+1)}\mid \conf^{(t)}\right]\right]
				\ =\ \E\left[\conf^{(t)}\right]P
				\label{eq:aaawww}
			\end{eqnarray}
			and we obtain the claim by iterating the equation \eqref{eq:aaawww}.
		\end{proof}
		At the end of this section, we introduce the following $2$ lemmas, each of which guarantees an upper bound of the discrepancy between $\df<\rvD^{(t)}_v(k)=u>$ and its expectation.
		\begin{lemma}[Discrepancy around one load]
			\label{lemm:discload}
			For any $t\geq 0$, $v\in V$ and $k\in \{0,1,\ldots \conf^{(t)}_v-1\}$, it holds that
			\begin{eqnarray*}
				\sum_{u\in \trN_v}\left| \df<\rvD^{(t)}_v(k)=u>-\E\left[\df<\rvD^{(t)}_v(k)=u>\mid \conf^{(t)}_v\right] \right|&\leq & 2.
			\end{eqnarray*}
		\end{lemma}
		\begin{proof}	
		Since 
		\begin{eqnarray*}
		&&\sum_{u\in \trN_v}\df<\rvD^{(t)}_v(k)=u>\ =\ 1\ \ \ {\rm and} \\
		&&\sum_{u\in \trN_v}\E\left[\df<\rvD^{(t)}_v(k)=u>\mid \conf^{(t)}_v\right]\ =\ \sum_{u\in \trN_v}\Pro\left[\rvD^{(t)}_v(k)=u\right]\ =\ 1
		\end{eqnarray*}
		hold from the definition \eqref{def:rvD}, we obtain the claim.
		Note that 
		\begin{eqnarray*}
			\left| \df<\rvD^{(t)}_v(k)=u>-\E\left[\df<\rvD^{(t)}_v(k)=u>\mid \conf^{(t)}_v\right] \right|
			&\leq & \left| \df<\rvD^{(t)}_v(k)=u>\right|+\left| \E\left[\df<\rvD^{(t)}_v(k)=u>\mid \conf^{(t)}_v\right] \right|.
		\end{eqnarray*}
		\end{proof}
		\begin{lemma}[Discrepancy around one neighbor]
			\label{lemm:discneigh}
			For any $t\geq 0$, $v\in V$ and~$u\in \trN_v$, it holds that
			\begin{eqnarray*}
				\sum_{k=0}^{\conf^{(t)}_v-1}\left| \df<\rvD^{(t)}_v(k)=u>-\E\left[\df<\rvD^{(t)}_v(k)=u>\mid \conf^{(t)}_v\right] \right|&\leq & 2.
			\end{eqnarray*}
		\end{lemma}
		\begin{proof} 
		Let $\IP_{v,u}=[a_u,b_u)$, and let
		\begin{eqnarray*}
		A^{(t)}_v(u)&\defeq &\left \{k\in \{0,\ldots, \conf^{(t)}_v-1\}\mid \frac{k}{\conf^{(t)}_v}<a_u<\frac{k+1}{\conf^{(t)}_v} \right \}\ \ \ {\rm and} \\
		B^{(t)}_v(u)&\defeq &\left \{k\in \{0,\ldots, \conf^{(t)}_v-1\}\mid \frac{k}{\conf^{(t)}_v}<b_u<\frac{k+1}{\conf^{(t)}_v} \right \}.
		\end{eqnarray*}
		Note that both $|A^{(t)}_v(u)|\leq 1$ and $|B^{(t)}_v(u)|\leq 1$ hold 
		since $\left[\frac{k}{\conf^{(t)}_v},\frac{k+1}{\conf^{(t)}_v}\right)$ and $\left[\frac{k'}{\conf^{(t)}_v},\frac{k'+1}{\conf^{(t)}_v}\right)$ are disjoint for any $k\neq k'$.

		Now, we observe that if $k\notin \bigl(A^{(t)}_v(u)\cup B^{(t)}_v(u)\bigr)$, then {\bf 1.} $\left[\frac{k}{\conf^{(t)}_v},\frac{k+1}{\conf^{(t)}_v}\right)\subseteq \IP_{v,u}$ 
		or {\bf 2.} $\left[\frac{k}{\conf^{(t)}_v},\frac{k+1}{\conf^{(t)}_v}\right) \cap \IP_{v,u}=0$.
		For the case {\bf 1.}, 
		$\Pro[\rvD^{(t)}_v(k)=u\mid \conf^{(t)}_v]=1$ from the Observation~\ref{obs:exD}. Then, $\df<\rvD^{(t)}_v(k)=u>=1$ since $\rvD^{(t)}_v(k)$ must be $u$.
		For the case {\bf 2.}, 
		$\Pro[\rvD^{(t)}_v(k)=u\mid \conf^{(t)}_v]=0$ from the Proposition~\ref{obs:exD}. Then, $\df<\rvD^{(t)}_v(k)=u>=0$ since $\rvD^{(t)}_v(k)$ must not be $u$.
		Hence $\df<\rvD^{(t)}_v(k)=u>-\E[\df<\rvD^{(t)}_v(k)=u>\mid \conf^{(t)}_v]=0$ for both cases. Thus
		\begin{eqnarray*}
			\lefteqn{\sum_{k=0}^{\conf^{(t)}_v-1}\left| \df<\rvD^{(t)}_v(k)=u>-\E\left[\df<\rvD^{(t)}_v(k)=u>\mid \conf^{(t)}_v\right] \right|}\nonumber \\
			&=&\sum_{\substack{k\in \\ A^{(t)}_v(u)\cup B^{(t)}_v(u)}}\left| \df<\rvD^{(t)}_v(k)=u>-\E\left[\df<\rvD^{(t)}_v(k)=u>\mid \conf^{(t)}_v\right] \right| 
			\leq |A^{(t)}_v(u)\cup B^{(t)}_v(u)|
			\leq 2.
		\end{eqnarray*}
		\end{proof}
	%
	\subsection{Framework of the proof}
		Now, we estimate the discrepancy between $\conf^{(T)}_w$ and $(\conf^{(0)}\trM^T)_w$ for the Theorem~\ref{thm:main}.
		For the convenience, we introduce an useful notation.
		Let $V=\{0,1,\ldots,\Nov - 1\}$.
		Then, for any $\ft\in \{0,1,\ldots, \tT-1\}$, $\fv\in \VS$, and $\fk\in \{0,1,\ldots, \Not-1\}$, let
		\begin{eqnarray}
			\label{def:rvDs}
			\rvDs_{\Not \Nov \ft+\Not \fv +\fk}\ \defeq\ 
			\begin{cases}
				\rvD^{(\ft)}_\fv(\fk) & (\mathrm{if}\ \fk\in \{0,1,\ldots, \Not-1\}) \\
				-1 & (\mathrm{otherwise})
			\end{cases}.
		\end{eqnarray}
		This definition means that $\df<\rvDs_{\Not \Nov \ft+\Not \fv +\fk}=u>=0$ for any $\conf^{(t)}_\fv\leq \fk \leq \Not-1$ and $u\in V$.
		Then, the following is easily observed from Observation~\ref{obs:defconf} and the definition \eqref{def:rvDs}.
		\begin{observation}
			For any $t\in \mathbb{Z}_{\geq 0}$, $\conf^{(t)}$ is determined by 
			$\conf^{(0)}$ and $\rvDs_{0}, \rvDs_{1}, \ldots, \rvDs_{\Not \Nov t -1}$.
			\label{obs:confM}
		\end{observation}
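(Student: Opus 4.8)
The plan is to prove Observation~\ref{obs:confM} by induction on $t$, with Observation~\ref{obs:defconf} and the flat re-indexing \eqref{def:rvDs} doing all of the work. The base case $t=0$ is immediate: the index set $\{0,1,\ldots,\Not\Nov\cdot 0-1\}$ is empty, so the claim degenerates to the tautology that $\conf^{(0)}$ is determined by $\conf^{(0)}$ alone. For the inductive step I would assume that $\conf^{(t)}$ is a deterministic function of $\conf^{(0)}$ and $\rvDs_0,\ldots,\rvDs_{\Not\Nov t-1}$, and then show that $\conf^{(t+1)}$ is determined once we additionally know the block $\rvDs_{\Not\Nov t},\ldots,\rvDs_{\Not\Nov(t+1)-1}$.

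The mechanism is to start from Observation~\ref{obs:defconf},
\[
  \conf^{(t+1)}_u \;=\; \sum_{v\in V}\sum_{k=0}^{\conf^{(t)}_v-1}\df<\rvD^{(t)}_v(k)=u>,
\]
and rewrite the inner sum through \eqref{def:rvDs}. For $k<\conf^{(t)}_v$ the term $\df<\rvD^{(t)}_v(k)=u>$ equals $\df<\rvDs_{\Not\Nov t+\Not v+k}=u>$, while for $\conf^{(t)}_v\le k\le \Not-1$ the padded value $\rvDs_{\Not\Nov t+\Not v+k}=-1\notin V$ forces $\df<\rvDs_{\Not\Nov t+\Not v+k}=u>=0$ for every $u\in V$. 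Hence the upper limit of the inner sum can be lifted to $\Not-1$ without changing its value, giving
\[
  \conf^{(t+1)}_u \;=\; \sum_{v\in V}\sum_{k=0}^{\Not-1}\df<\rvDs_{\Not\Nov t+\Not v+k}=u>.
\]
The key point is then purely combinatorial: as $v$ runs over $\{0,\ldots,\Nov-1\}$ and $k$ over $\{0,\ldots,\Not-1\}$, the index $\Not\Nov t+\Not v+k$ runs over the block $\{\Not\Nov t,\Not\Nov t+1,\ldots,\Not\Nov(t+1)-1\}$, hitting each value exactly once. Thus the right-hand side is a function of the single block $\rvDs_{\Not\Nov t},\ldots,\rvDs_{\Not\Nov(t+1)-1}$, and combining this with the inductive hypothesis on $\conf^{(t)}$ shows that $\conf^{(t+1)}$ is determined by $\conf^{(0)}$ and $\rvDs_0,\ldots,\rvDs_{\Not\Nov(t+1)-1}$, closing the induction.

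The only delicate point is the index bookkeeping — checking that the padded slots contribute nothing and that the flat indices tile the block $[\Not\Nov t,\Not\Nov(t+1))$ exactly once, so that no destination variable is double-counted or omitted. This is routine arithmetic rather than a genuine obstacle; the entire substance of the statement is already supplied by Observation~\ref{obs:defconf}, and the re-indexing \eqref{def:rvDs} merely repackages the per-vertex, per-time destinations into a single linearly ordered sequence suitable for the martingale argument in the next subsection.
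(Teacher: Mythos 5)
Your induction is correct and is exactly the argument the paper has in mind: the paper gives no explicit proof, stating only that the observation "is easily observed from Observation~\ref{obs:defconf} and the definition \eqref{def:rvDs}", and your inductive step (lifting the inner sum to $\Not-1$ via the padded $-1$ slots and noting that $\Not\Nov t+\Not v+k$ tiles the block $[\Not\Nov t,\Not\Nov(t+1))$ bijectively) is precisely that routine verification spelled out. No gaps.
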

		Note that $\rvDs_{\Not \Nov t -1}=\rvDs_{\Not \Nov (t -1)+\Not(\Nov-1)+(\Not-1)}$, which is $\rvD^{(t-1)}_{\Nov-1}(\Not-1)$ or $-1$.

		The main idea to estimate $\conf^{(T)}_w-(\conf^{(0)}\trM^T)_w$ is applying the Azuma-Hoeffding inequality (See Appendix~\ref{sec:concentration})
		to the martingale $\Y_0, \Y_1, \ldots, \Y_{\Not \Nov T}$ respect to $\rvDs_0, \rvDs_1, \ldots, \rvD_{\Not \Nov T-1}$, where
		\begin{eqnarray}
			\Y_\ell&\defeq &\E\left[\conf^{(T)}_w-(\conf^{(0)}P^T)_w\mid \rvDs_{0}, \rvDs_{1}, \ldots, \rvDs_{\ell-1}\right].
		\end{eqnarray}
		We assume that $\Y_{0}\defeq \E[\conf^{(T)}_w-(\conf^{(0)}P^T)_w]$.
		Since $\Y_{\Not \Nov \tT}=\conf^{(\tT)}_w-(\conf^{(0)}\trM^\tT)_w$ from Observation~\ref{obs:confM}
		and $\Y_{0}= \E[\conf^{(T)}_w-(\conf^{(0)}P^T)_w]=0$ from Lemma~\ref{lemm:exconf}, 
		\begin{eqnarray}
			\Pro\bigl[ |\Y_{\Not \Nov \tT}-\Y_{0} | \geq \eta \bigr] \leq 2\exp\left[ -\eta^2 / 2\sum_{\ell=0}^{ \Not \Nov \tT} (c_\ell)^2 \right]
			\label{eq:ah1111}
		\end{eqnarray}
		for any $\eta>0$ from Azuma-Hoeffding inequality, where $c_\ell$ is a value satisfies $\left| \Y_{\ell+1} - \Y_{\ell} \right|\leq c_\ell$. Hence
		\begin{eqnarray}
			\Pro\left[ \max_{w\in V}|\conf^{(\tT)}_w-(\conf^{(0)}\trM^\tT)_w| \geq 2\sqrt{ \sum_{\ell=0}^{ \Not \Nov \tT-1} (c_\ell)^2 \log \Nov  } \right] \leq \frac{2}{\Nov}
			\label{eq:ah1}
		\end{eqnarray}
		by taking $\eta=\sqrt{ 2\sum_{\ell=0}^{ \Not \Nov \tT -1} (c_\ell)^2 \log \Nov ^2 }$ and using the union bound.
		%
		%

		Thus, our main concern to obtain Theorem~\ref{thm:main} is the upper bound of the difference 
		$\Y_{\ell+1} - \Y_{\ell}=\E[\conf^{(T)}_w-(\conf^{(0)}P^T)_w\mid \rvDs_0, \ldots, \rvDs_\ell ]-\E[\conf^{(T)}_w-(\conf^{(0)}P^T)_w\mid \rvDs_0, \rvDs_1, \ldots, \rvDs_{\ell-1}]$.
		For this key value, we showed the following Lemma.
		\begin{lemma}[Martingale difference]
		\label{lemm:mainM}
			For any $\ft\in \{0,1,\ldots, T-1\}$, $\fv\in \{0,1,\ldots, \Nov - 1\}$ and $\fk\in \{0,1,\ldots, \Not-1\}$, 
			let $\ell=\Not \Nov \ft+\Not \fv+\fk$. Then, it holds that
			\begin{eqnarray*}
				\lefteqn{ \E\left[\conf^{(T)}_w-(\conf^{(0)}P^T)_w\mid \rvDs_0, \rvDs_1, \ldots, \rvDs_\ell \right]
				-\E\left[\conf^{(T)}_w-(\conf^{(0)}P^T)_w\mid \rvDs_0, \rvDs_1, \ldots, \rvDs_{\ell-1} \right] }\\
				&=&\sum_{u\in \trN_\fv}\left( \df<\rvDs_{\ell}=u>-\E\left[\df<\rvDs_\ell=u>\mid \conf^{(\ft)}_\fv\right] \right)\left(\trM^{T-\ft-1}_{u,w}-\trM^{T-\ft-1}_{\fv,w}\right).
			\end{eqnarray*}
		\end{lemma}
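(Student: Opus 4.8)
The plan is to prove the identity by peeling off the last $T-\ft-1$ diffusion steps and then isolating the single coordinate $\rvDs_\ell$ that distinguishes the two conditionings. Throughout write $F \defeq \conf^{(T)}_w - (\conf^{(0)}\trM^T)_w$; since $(\conf^{(0)}\trM^T)_w$ is constant it cancels in the difference, so it suffices to analyse $\E[\conf^{(T)}_w \mid \cdot]$. First I would observe that, because $\ell = \Not\Nov\ft + \Not\fv + \fk < \Not\Nov(\ft+1)$, both conditionings $\rvDs_0,\ldots,\rvDs_{\ell-1}$ and $\rvDs_0,\ldots,\rvDs_\ell$ reveal only destinations up to time step $\ft$ and in particular determine $\conf^{(\ft)}$ (Observation~\ref{obs:confM}). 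The key reduction is the Markov structure of Algorithm~2: conditioned on the full configuration $\conf^{(\ft+1)}$, the evolution from time $\ft+1$ to $T$ uses fresh random numbers independent of $\rvDs_0,\ldots,\rvDs_\ell$, so by Lemma~\ref{lemm:exconf} applied with initial configuration $\conf^{(\ft+1)}$ and horizon $T-\ft-1$ we get $\E[\conf^{(T)}_w \mid \conf^{(\ft+1)}, \rvDs_0,\ldots,\rvDs_m] = (\conf^{(\ft+1)}\trM^{T-\ft-1})_w$ for every $m \le \ell < \Not\Nov(\ft+1)$. Taking a further expectation and using linearity yields
\[
\E[\conf^{(T)}_w \mid \rvDs_0,\ldots,\rvDs_m] = \sum_{x\in V}\trM^{T-\ft-1}_{x,w}\,\E[\conf^{(\ft+1)}_x \mid \rvDs_0,\ldots,\rvDs_m].
\]

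Next I would expand $\conf^{(\ft+1)}_x = \sum_{j=\Not\Nov\ft}^{\Not\Nov(\ft+1)-1}\df<\rvDs_j = x>$ (Observation~\ref{obs:defconf} rewritten through the indexing \eqref{def:rvDs}) and compare the two conditional expectations term by term in $j$. For $j < \ell$ the variable $\rvDs_j$ is already revealed, so $\df<\rvDs_j=x>$ is measurable with respect to both $\sigma$-algebras and its contribution cancels. For $j > \ell$, Observation~\ref{obs:indepD} shows that, conditioned on $\conf^{(\ft)}$, the destination $\rvDs_j$ is independent of $\rvDs_\ell$ and of the earlier time-$\ft$ destinations, so $\E[\df<\rvDs_j=x>\mid \rvDs_0,\ldots,\rvDs_\ell] = \E[\df<\rvDs_j=x>\mid \conf^{(\ft)}] = \E[\df<\rvDs_j=x>\mid \rvDs_0,\ldots,\rvDs_{\ell-1}]$, and this contribution also cancels. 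Only $j=\ell$ survives: conditioning on $\rvDs_0,\ldots,\rvDs_\ell$ makes $\rvDs_\ell$ known, while under $\rvDs_0,\ldots,\rvDs_{\ell-1}$ the same conditional independence together with the fact that $\rvD^{(\ft)}_\fv(\fk)$ depends only on $\conf^{(\ft)}_\fv$ reduces the expectation to $\E[\df<\rvDs_\ell=x>\mid \conf^{(\ft)}_\fv]$. Collecting these gives
\[
\Y_{\ell+1}-\Y_\ell = \sum_{x\in V}\Bigl(\df<\rvDs_\ell=x> - \E[\df<\rvDs_\ell=x>\mid \conf^{(\ft)}_\fv]\Bigr)\trM^{T-\ft-1}_{x,w}.
\]

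Finally I would restrict the sum to $x\in\trN_\fv$, since $\df<\rvDs_\ell=x>=0$ unless $x\in\trN_\fv$ (and $\df<\rvDs_\ell=x>=0$ for all $x$ when $\fk\ge\conf^{(\ft)}_\fv$, by the convention in \eqref{def:rvDs}), and the conditional expectation likewise vanishes outside $\trN_\fv$. To reach the asserted $(\trM^{T-\ft-1}_{u,w}-\trM^{T-\ft-1}_{\fv,w})$ form I would add the identically-zero quantity $-\trM^{T-\ft-1}_{\fv,w}\sum_{u\in\trN_\fv}\bigl(\df<\rvDs_\ell=u> - \E[\df<\rvDs_\ell=u>\mid\conf^{(\ft)}_\fv]\bigr)$; this inner sum vanishes because $\sum_{u\in\trN_\fv}\df<\rvDs_\ell=u>$ and $\sum_{u\in\trN_\fv}\E[\df<\rvDs_\ell=u>\mid\conf^{(\ft)}_\fv]$ are both equal to $1$ when $\fk<\conf^{(\ft)}_\fv$ and both equal to $0$ otherwise. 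This produces exactly the claimed expression.

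The step I expect to be the main obstacle is making the Markov / conditional-independence reduction fully rigorous: one must argue carefully that conditioning on the partially revealed destinations $\rvDs_0,\ldots,\rvDs_\ell$ interacts with the future diffusion only through $\conf^{(\ft+1)}$, and that given $\conf^{(\ft)}$ the single destination $\rvDs_\ell$ is independent of all other time-$\ft$ destinations (Observation~\ref{obs:indepD}). Once these independence facts are in place, the term-by-term cancellation and the final ``free'' $\fv$-term are routine.
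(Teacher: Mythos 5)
Your proof is correct, and it reaches the identity by a genuinely different route than the paper. The paper first establishes a pathwise telescoping identity (Lemma~\ref{lemm:discframe}) expressing $\conf^{(T)}_w-(\conf^{(0)}\trM^T)_w$ as a sum over \emph{all} times $t=0,\ldots,T-1$, loads and neighbours of terms $\bigl(\df<\rvD^{(t)}_v(k)=u>-\E[\df<\rvD^{(t)}_v(k)=u>\mid \conf^{(t)}_v]\bigr)\bigl(\trM^{T-t-1}_{u,w}-\trM^{T-t-1}_{v,w}\bigr)$, then applies both conditionings to every term and shows, via Lemmas~\ref{lemm:keylm1}--\ref{lemm:keydiffconf} and a case analysis on $l\le \ell-1$ versus $l\ge \ell+1$, that every term with $l\neq\ell$ cancels. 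You instead peel off the future with the Markov property, reducing $\E[\conf^{(T)}_w\mid \rvDs_0,\ldots,\rvDs_m]$ to $\sum_{x}\trM^{T-\ft-1}_{x,w}\,\E[\conf^{(\ft+1)}_x\mid \rvDs_0,\ldots,\rvDs_m]$ and then comparing only the time-$\ft$ block of destinations; this avoids both the global identity and the case analysis over future times, so it is shorter. Both arguments ultimately rest on the same two facts --- the tower property and the conditional independence of the destinations given $\conf^{(\ft)}$ (Observation~\ref{obs:indepD}) --- and your final zero-sum trick introducing $-\trM^{T-\ft-1}_{\fv,w}$ is exactly the paper's step \eqref{eq:lframe3}. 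What the paper's route buys is an explicit pathwise representation of the entire increment sequence; what yours buys is economy, at the price of the one step you correctly flag as delicate: to make the reduction $\E[\conf^{(T)}_w\mid \conf^{(\ft+1)},\rvDs_0,\ldots,\rvDs_m]=(\conf^{(\ft+1)}\trM^{T-\ft-1})_w$ rigorous one must invoke the independence of the fresh random numbers $r^{(t')}_\cdot(\cdot)$, $t'\ge \ft+1$, from everything revealed so far, which is the same fact the paper uses implicitly in Lemma~\ref{lemm:keylm1}.
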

		Combining Lemma~\ref{lemm:mainM}, \eqref{eq:ah1}, Lemma~\ref{lemm:discload} and Lemma~\ref{lemm:discneigh}, we can show Theorem~\ref{thm:main}.
		\begin{proof}[Proof of Theoem~\ref{thm:main}]
			From the Lemma~\ref{lemm:mainM}, Cauchy-schwarz inequality and Lemma~\ref{lemm:discneigh}, 
			\begin{eqnarray}
			\left( \Y_{\ell+1} -\Y_{\ell}\right)^2
			&\leq &\left( \sum_{u\in \trN_\fv}\left|\df<\rvDs_{\ell}=u>-\E\left[\df<\rvDs_\ell=u>\mid \conf^{(\ft)}_\fv\right]\right|
						\left|\trM^{T-\ft-1}_{u,w}-\trM^{T-\ft-1}_{\fv,w}\right|\right)^2\nonumber \\
			&\leq &\sum_{u\in \trN_\fv}\left|\df<\rvDs_{\ell}=u>-\E\left[\df<\rvDs_\ell=u>\mid \conf^{(\ft)}_\fv\right] \right| \cdotp \nonumber \\
			&&\sum_{u\in \trN_\fv}\left|\df<\rvDs_{\ell}=u>-\E\left[\df<\rvDs_\ell=u>\mid \conf^{(\ft)}_\fv\right] \right|\left|\trM^{T-\ft-1}_{u,w}-\trM^{T-\ft-1}_{\fv,w}\right|^2 \nonumber \\
			&\leq &2\sum_{u\in \trN_\fv}\left|\df<\rvDs_{\ell}=u>-\E\left[\df<\rvDs_\ell=u>\mid \conf^{(\ft)}_\fv\right] \right|\left(\trM^{T-\ft-1}_{u,w}-\trM^{T-\ft-1}_{\fv,w}\right)^2.
		\end{eqnarray}
		Thus, using Lemma~\ref{lemm:discneigh} and note the definition of $\rvDs_\ell$, 
		\begin{eqnarray}
			\lefteqn{ \sum_{\ell=0}^{\Not \Nov T-1}\left( \Y_{\ell+1} -\Y_{\ell}\right)^2 
			\ =\  \sum_{\ft=0}^{T-1}\sum_{\fv=0}^{\Nov - 1}\sum_{\fk=0}^{\Not -1}\left( \Y_{\Not \Nov \ft+\Not \fv+\fk+1} -\Y_{\Not \Nov \ft+\Not \fv+\fk}\right)^2} \nonumber \\
			&\leq &\sum_{\ft=0}^{T-1}\sum_{\fv=0}^{\Nov - 1}\sum_{\fk=0}^{\conf^{(\ft)}_\fv -1}
			\left(2\sum_{u\in \trN_\fv}\left|\df<\rvD^{(\ft)}_\fv(\fk)=u>-\E\left[\df<\rvD^{(\ft)}_\fv(\fk)=u>\mid \conf^{(\ft)}_\fv\right] \right|\left(\trM^{T-\ft-1}_{u,w}-\trM^{T-\ft-1}_{\fv,w}\right)^2 \right)\nonumber \\
			&\leq&4\sum_{\ft=0}^{T-1}\sum_{\fv=0}^{\Nov - 1}\sum_{u\in \trN_\fv}\left(\trM^{T-\ft-1}_{u,w}-\trM^{T-\ft-1}_{\fv,w}\right)^2
			\ \leq \ \bigl(2\Psi_2(P)\bigr)^2. \label{eq:comp}
		\end{eqnarray}
		Combining \eqref{eq:ah1} and \eqref{eq:comp}, we obtain the claim.
	\end{proof}
\if0
\begin{proof}[Proof of Theorem]
Since $\trM$ is symmetric from the assumption, we have
\begin{eqnarray*}
\lefteqn{\left|\conf^{(T)}_v-\conf^{(T)}_u\right|}\\
&=&\left|\left(\conf^{(T)}_v-(\conf^{(0)}P^T)_v\right)-\left((\conf^{(0)}P^T)_v-(\conf^{(0)}P^T)_u\right)-\left((\conf^{(0)}P^T)_u-\conf^{(T)}_u\right)\right|\\
&\leq &\left|\conf^{(T)}_v-(\conf^{(0)}P^T)_v\right|+\left|(\conf^{(0)}P^T)_v-(\conf^{(0)}P^T)_u\right|+\left|(\conf^{(0)}P^T)_u-\conf^{(T)}_u\right| \\
&\leq & 4\Psi_2(P)+\varepsilon.
\end{eqnarray*}
\end{proof}
\fi

	To complete the proof, we prove Lemma~\ref{lemm:mainM} in the following subsection.
	\subsection{Proof of Lemma~\ref{lemm:mainM}}
	\label{subsec:prof34}
		Lemma~\ref{lemm:mainM} is shown by carefully discussions of the conditional expectations characterized by the following Lemmas derived from Observations~\ref{obs:defconf}-\ref{obs:indepD}.
		First, we introduce the following lemma, 
		which characterizes the difference between $\conf^{(T)}_w$ and $(\conf^{(0)}P^T)_w$ 
		by the summation of the geometric series of the round matrix and $\df<\rvD^{(t)}_v(k)=u>-\E[\df<\rvD^{(t)}_v(k)=u>\mid \conf^{(t)}_v]$.
		\begin{lemma}[Relation between the discrete diffusion and the continuous diffusion]
			\begin{eqnarray*}
			\lefteqn{\conf^{(T)}_w-(\conf^{(0)}P^T)_w} \\
			&=&\sum_{t=0}^{T-1}\sum_{v\in V}\sum_{k=0}^{\conf^{(t)}_v-1}\sum_{u\in \trN_v}
			\left(\df<\rvD^{(t)}_v(k)=u>-\E\left[\df<\rvD^{(t)}_v(k)=u>\mid \conf^{(t)}_v\right]\right)\left(P^{T-t-1}_{u,w}-P^{T-t-1}_{v,w}\right)
			\end{eqnarray*}
			holds for any $\conf^{(0)}\in \mathbb{Z}^\Nov_{\geq 0}$, $T\geq 0$, $\conf^{(T)}$ of Algorithm~2 and $w\in V$.
			\label{lemm:discframe}
		\end{lemma}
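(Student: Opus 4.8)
The plan is to prove the identity by a telescoping decomposition along the interpolation $S_t \defeq (\conf^{(t)} P^{T-t})_w$, which morphs the continuous prediction $(\conf^{(0)}P^T)_w$ into the discrete value $\conf^{(T)}_w$ one step at a time. First I would observe that $S_T = \conf^{(T)}_w$ and $S_0 = (\conf^{(0)}P^T)_w$, so that $\conf^{(T)}_w - (\conf^{(0)}P^T)_w = \sum_{t=0}^{T-1}(S_{t+1}-S_t)$, with each summand equal to $(\conf^{(t+1)}P^{T-t-1})_w - (\conf^{(t)}P^{T-t})_w$. Writing $P^{T-t} = P\,P^{T-t-1}$ lets me group this single term as $\sum_{u'\in V}\bigl(\conf^{(t+1)}_{u'} - (\conf^{(t)}P)_{u'}\bigr)P^{T-t-1}_{u',w}$, reducing the whole problem to understanding the one-step deviation $\conf^{(t+1)}_{u'} - (\conf^{(t)}P)_{u'}$.

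Next I would expand this deviation via Observations~\ref{obs:defconf} and~\ref{obs:exD}. Observation~\ref{obs:defconf} gives $\conf^{(t+1)}_{u'} = \sum_{v}\sum_{k=0}^{\conf^{(t)}_v-1}\df<\rvD^{(t)}_v(k)=u'>$, while the computation inside the proof of Lemma~\ref{lemm:exconf} (built on Observation~\ref{obs:exD}) gives $(\conf^{(t)}P)_{u'} = \sum_{v}\sum_{k}\E[\df<\rvD^{(t)}_v(k)=u'>\mid\conf^{(t)}_v]$. Subtracting, the one-step deviation becomes the double sum over $v$ and $k$ of the centered indicator $\df<\rvD^{(t)}_v(k)=u'> - \E[\df<\rvD^{(t)}_v(k)=u'>\mid\conf^{(t)}_v]$. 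Substituting back, the $t$-th telescoping term is the triple sum over $u'\in V$, $v\in V$ and $k$ of these centered indicators weighted by $P^{T-t-1}_{u',w}$.

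The crux---and the only step I expect to be non-mechanical---is converting the weight $P^{T-t-1}_{u',w}$ into the difference $P^{T-t-1}_{u',w} - P^{T-t-1}_{v,w}$ and restricting the inner sum to $u\in\trN_v$. For each fixed $v$ and $k$ the variable $\rvD^{(t)}_v(k)$ takes exactly one value and its law is supported on $\trN_v$, so $\sum_{u'\in V}\df<\rvD^{(t)}_v(k)=u'> = 1$ and $\sum_{u'\in V}\E[\df<\rvD^{(t)}_v(k)=u'>\mid\conf^{(t)}_v] = 1$; hence the centered indicators sum to $0$ over $u'$. This ``gauge freedom'' lets me subtract $P^{T-t-1}_{v,w}$ times the vanishing total without changing the value, producing the factor $P^{T-t-1}_{u',w} - P^{T-t-1}_{v,w}$. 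Since both the indicator and its conditional expectation vanish for $u'\notin\trN_v$, the sum over $u'\in V$ collapses to a sum over $u\in\trN_v$, yielding exactly the claimed expression. Reassembling the telescope over $t$ then finishes the proof, and the rest is purely index bookkeeping.
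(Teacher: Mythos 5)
Your proposal is correct and follows essentially the same route as the paper: the telescoping sum $\sum_{t}\bigl((\conf^{(t+1)}P^{T-t-1})_w-(\conf^{(t)}P^{T-t})_w\bigr)$ is exactly the paper's identity $\sum_{t=0}^{T-1}(\conf^{(t+1)}-\conf^{(t)}\trM)\trM^{T-t-1}=\conf^{(T)}-\conf^{(0)}\trM^{T}$, and the subsequent expansion via Observations~\ref{obs:defconf} and~\ref{obs:exD} together with the vanishing-sum (``gauge'') argument to insert $-P^{T-t-1}_{v,w}$ matches the paper's equations \eqref{eq:lframe22}--\eqref{eq:lframe3}. No gaps.
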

		\begin{proof}
			Since
			\begin{eqnarray*}
				\sum_{t=0}^{\tT-1}\left(\conf^{(t+1)}-\conf^{(t)}\trM\right)\trM^{T-t-1} 
				&=& \sum_{t=0}^{\tT-1}\left(\conf^{(t+1)}\trM^{T-t-1}-\conf^{(t)}\trM^{T-t}\right) \nonumber \\
				&=& \conf^{(\tT)}\trM^0-\conf^{(0)}\trM^\tT
				\ =\ \conf^{(\tT)}-\conf^{(0)}\trM^\tT, 
				\label{eq:frame1}
			\end{eqnarray*}
			\begin{eqnarray}
				\conf^{(\tT)}_w-(\conf^{(0)}\trM^\tT)_w
				&=&\sum_{t=0}^{\tT-1}\sum_{u\in V}\left(\conf^{(t+1)}_u-(\conf^{(t)}\trM)_u\right)\trM^{T-t-1}_{u,w}
				\label{eq:lframe0}
			\end{eqnarray}
			holds. Then, from Observation~\ref{obs:defconf}, 
			\begin{eqnarray}
				\conf^{(t+1)}_u = \sum_{v\in V}\sum_{k=0}^{\conf^{(t)}_v-1}\df<\rvD^{(t)}_v(k)=u>
				\label{eq:lframe01}
			\end{eqnarray} holds and from Observation~\ref{obs:exD},
			\begin{eqnarray}
				(\conf^{(t)}P)_u &= &\sum_{v\in V}\conf^{(t)}_v\trM_{v,u} 
				\ =\ \sum_{v\in V}\conf^{(t)}_v\sum_{k=0}^{\conf^{(t)}_v-1}\left|\left[\frac{k}{\conf^{(t)}_v}, \frac{k+1}{\conf^{(t)}_v}\right)\cap \IP_{v,u}\right| \nonumber \\
				&=&\sum_{v\in V}\sum_{k=0}^{\conf^{(t)}_v-1}\E\left[\df<\rvD^{(t)}_v(k)=u>\mid \conf^{(t)}_v\right].\label{eq:lframe00}
			\end{eqnarray}
			holds. Thus, from \eqref{eq:lframe0}--\eqref{eq:lframe00}, 
			\begin{eqnarray}
			\conf^{(\tT)}_w-(\conf^{(0)}\trM^\tT)_w
			&=&\sum_{t=0}^{\tT-1}\sum_{u\in V}\sum_{v\in V}\sum_{k=0}^{\conf^{(t)}_v-1}\left(\df<\rvD^{(t)}_v(k)=u>-\E\left[\df<\rvD^{(t)}_v(k)=u>\mid \conf^{(t)}_v\right]\right)\trM^{T-t-1}_{u,w}\nonumber \\
			&=&\sum_{t=0}^{\tT-1}\sum_{v\in V}\sum_{k=0}^{\conf^{(t)}_v-1}\sum_{u\in \trN_v}\left(\df<\rvD^{(t)}_v(k)=u>-\E\left[\df<\rvD^{(t)}_v(k)=u>\mid \conf^{(t)}_v\right]\right)\trM^{T-t-1}_{u,w}. \nonumber \\
			\label{eq:lframe22}
			\end{eqnarray}
			Then, since 
			\begin{eqnarray}
				\sum_{u\in \trN_v}\left(\df<\rvD^{(t)}_v(k)=u>-\E\left[\df<\rvD^{(t)}_v(k)=u>\mid \conf^{(t)}_v\right]\right)P^{T-t-1}_{v,w}&=&0,
				\label{eq:lframe3}
			\end{eqnarray}
			we obtain the claim by subtracting \eqref{eq:lframe3} from \eqref{eq:lframe22}.
			Note that we have
			$
				\sum_{u\in \trN_v} \df<\rvD^{(t)}_v(k)=u>=1\label{eq:rvD1}
			$
			and
			$
				\sum_{u\in \trN_v} \E[\df<\rvD^{(t)}_v(k)=u>\mid \conf^{(t)}_v] 
				=1 
			$.
			
		\end{proof}
		Next, we introduce the following three lemmas which are concerned with the conditional expectation of $\df<\rvDs_l=u>$ and $\E[\df<\rvDs_l=u>\mid \conf^{(t)}_v]$.
		Each of them is derived from our key Observations~\ref{obs:defconf}--\ref{obs:indepD}.
		\begin{lemma}
			For any $t\in \{0,1,\ldots, T-1\}$, $v\in \{0,1,\ldots, \Nov - 1\}$ and $k\in \{0,1,\ldots, \Not-1\}$, 
			let $l=\Not \Nov t+\Not v+k$.
			Then, it holds that
			\begin{eqnarray*}
				\E\left[\df<\rvDs_l=u>\mid \rvDs_0, \rvDs_1, \ldots, \rvDs_{l-1}\right]&=&\E\left[\df<\rvDs_l=u>\mid \conf^{(t)}_v\right].
			\end{eqnarray*}
			\label{lemm:keylm1}
		\end{lemma}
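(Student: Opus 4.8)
The plan is to exploit the layered structure of the index $l=\Not\Nov t+\Not v+k$ together with the three key observations, reducing the statement to a conditional-independence fact. Since $v\in\{0,\ldots,\Nov-1\}$ and $k\in\{0,\ldots,\Not-1\}$, we have $l\geq \Not\Nov t$, so the conditioning list $\rvDs_0,\rvDs_1,\ldots,\rvDs_{l-1}$ contains in particular all of $\rvDs_0,\ldots,\rvDs_{\Not\Nov t-1}$. By Observation~\ref{obs:confM} these determine the whole configuration $\conf^{(t)}$, and hence the single coordinate $\conf^{(t)}_v$, given the fixed $\conf^{(0)}$. Thus $\conf^{(t)}$ is a measurable function of the conditioning variables, and it suffices to show that further conditioning on $\rvDs_0,\ldots,\rvDs_{l-1}$ adds nothing beyond fixing $\conf^{(t)}$, i.e. that $\rvDs_l$ is conditionally independent of $\rvDs_0,\ldots,\rvDs_{l-1}$ once $\conf^{(t)}$ is frozen.

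First I would split the conditioning variables into a \emph{past} block $\rvDs_0,\ldots,\rvDs_{\Not\Nov t-1}$ (destinations at time steps strictly before $t$) and a \emph{current} block $\rvDs_{\Not\Nov t},\ldots,\rvDs_{l-1}$ (the time-$t$ destinations $\rvD^{(t)}_{v'}(k')$ for pairs $(v',k')$ preceding $(v,k)$ in the fixed ordering). By definition $\rvDs_l$ equals $\rvD^{(t)}_v(k)$ when $k<\conf^{(t)}_v$, and equals $-1$ otherwise; in either case it is a deterministic function of the fresh random number $r^{(t)}_v(k)$ together with $\conf^{(t)}_v$. Conditioned on $\conf^{(t)}$, Observation~\ref{obs:indepD} gives that $\rvD^{(t)}_v(k)$ is independent of every $\rvD^{(t)}_{v'}(k')$ with $(v',k')\neq(v,k)$, hence of the entire current block; and since $r^{(t)}_v(k)$ is sampled at time $t$ with randomness independent of all earlier steps, $\rvDs_l$ is also independent of the past block once $\conf^{(t)}$ is fixed. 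The cleanest way to see the latter is to note that the sampling interval $[k/\conf^{(t)}_v,(k+1)/\conf^{(t)}_v)$ is itself $\conf^{(t)}$-measurable, so after freezing $\conf^{(t)}$ the draw $r^{(t)}_v(k)$ is a fresh uniform variable independent of the rest of the conditioning.

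Combining these, the conditional expectation collapses, $\E[\df<\rvDs_l=u>\mid \rvDs_0,\ldots,\rvDs_{l-1}]=\E[\df<\rvDs_l=u>\mid \conf^{(t)}]$. Finally, Observation~\ref{obs:exD} shows that this last quantity depends on $\conf^{(t)}$ only through the coordinate $\conf^{(t)}_v$: for $k<\conf^{(t)}_v$ it equals $|[k/\conf^{(t)}_v,(k+1)/\conf^{(t)}_v)\cap\IP_{v,u}|\cdot\conf^{(t)}_v$, while for $k\geq \conf^{(t)}_v$ the $-1$ convention of \eqref{def:rvDs} forces $\df<\rvDs_l=u>=0$, matching the empty intersection. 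Either way the value is a function of $\conf^{(t)}_v$ alone, which is exactly $\E[\df<\rvDs_l=u>\mid \conf^{(t)}_v]$, establishing the identity.

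The step I expect to require the most care is the conditional-independence reduction in the second paragraph. One must justify that conditioning on the mixed list — part of which (the past block) \emph{determines} $\conf^{(t)}$, while the rest (the current block) is only correlated with $\rvDs_l$ through that common configuration — contributes no information about $\rvD^{(t)}_v(k)$ beyond $\conf^{(t)}_v$. Observation~\ref{obs:indepD} supplies exactly the independence within the time step $t$, and the freshness of the time-$t$ sampling relative to earlier steps handles the past block; the only delicate point is that $\conf^{(t)}_v$ must be carried along as a conditioning quantity because it fixes the very interval from which $r^{(t)}_v(k)$ is drawn.
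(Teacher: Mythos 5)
Your proof is correct and takes essentially the same route as the paper's: split the conditioning list into the pre-time-$t$ block and the within-time-$t$ block, drop the latter via the independence of the time-$t$ destinations (Observation~\ref{obs:indepD}), and reduce the former to $\conf^{(t)}_v$ via Observation~\ref{obs:confM}. If anything, you are more careful than the paper on the last equality, where one must indeed also invoke the freshness of the time-$t$ randomness (so that the past block carries no information about $\rvD^{(t)}_v(k)$ beyond $\conf^{(t)}_v$) and the fact that, by Observation~\ref{obs:exD}, the conditional expectation depends on $\conf^{(t)}$ only through the coordinate $\conf^{(t)}_v$.
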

		\begin{proof}
			From the chain rule of conditional expectations, we have
			\begin{eqnarray*}
				\E\left[\df<\rvDs_l=u>\mid \rvDs_0, \ldots, \rvDs_{l-1}\right]
				&=&\E\left[\df<\rvDs_l=u>\mid \rvDs_0, \ldots, \rvDs_{\Not \Nov t-1}, \rvDs_{\Not \Nov t},\ldots,  \rvDs_{l-1}\right]\\
				&=&\E\left[\df<\rvDs_l=u>\mid \rvDs_0, \ldots, \rvDs_{\Not \Nov t-1}\right]
				=\E\left[\df<\rvDs_l=u>\mid \conf^{(t)}_v\right].
			\end{eqnarray*}
			The second equality holds from the independency (Observation~\ref{obs:indepD}).
			The last equality holds from Observation~\ref{obs:confM}, i.e. $\rvDs_0, \ldots, \rvDs_{\Not \Nov t-1}$ determines $\conf^{(t)}_v$.
		\end{proof}
		
		\begin{lemma}
			For any $t\in \{0,1,\ldots, T-1\}$, $v\in \{0,1,\ldots, \Nov - 1\}$ and $k\in \{0,1,\ldots, \Not-1\}$, 
			let $l=\Not \Nov t+\Not v+k$.
			Then, for any $\ell\in \{0,1,\ldots, \Not \Nov T-1\}$, it holds that
			\begin{eqnarray*}
				\lefteqn{\E\left[\df<\rvDs_l=u>\mid  \rvDs_0, \rvDs_1, \ldots, \rvDs_\ell \right]}\\
				&=&
				\begin{cases}
					\df<\rvDs_l=u> & ({\rm if}\ l\leq \ell)\\
					\E\left[\E\left[\df<\rvDs_l=u>\mid \conf^{(t)}_v\right]\mid \rvDs_0, \rvDs_1, \ldots, \rvDs_{\ell} \right] & ({\rm if}\ l \geq \ell+1)
				\end{cases}
			\end{eqnarray*}
			\label{lemm:keydiffdest}
		\end{lemma}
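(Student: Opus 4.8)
The plan is to split along the dichotomy in the statement and handle each case with a standard conditional-expectation manipulation, leaning on Lemma~\ref{lemm:keylm1} for the second case.

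First I would dispose of the easy case $l\leq \ell$. Here the index $l$ is among those being conditioned on, so $\rvDs_l$ is one of $\rvDs_0,\ldots,\rvDs_\ell$. Since $\df<\rvDs_l=u>$ is a deterministic function of $\rvDs_l$ alone (this remains true even in the degenerate branch $\rvDs_l=-1$ of the definition~\eqref{def:rvDs}, where $\df<\rvDs_l=u>=0$), it is measurable with respect to the collection $\rvDs_0,\ldots,\rvDs_\ell$, and the ``taking out what is known'' rule gives $\E[\df<\rvDs_l=u>\mid \rvDs_0,\ldots,\rvDs_\ell]=\df<\rvDs_l=u>$, which is the first branch.

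For the case $l\geq \ell+1$ I would invoke the tower property of conditional expectation. Because $l-1\geq \ell$, conditioning on $\rvDs_0,\ldots,\rvDs_{l-1}$ refines conditioning on $\rvDs_0,\ldots,\rvDs_\ell$, and therefore
\[
\E\left[\df<\rvDs_l=u>\mid \rvDs_0,\ldots,\rvDs_\ell\right]
=\E\left[\E\left[\df<\rvDs_l=u>\mid \rvDs_0,\ldots,\rvDs_{l-1}\right]\mid \rvDs_0,\ldots,\rvDs_\ell\right].
\]
I would then apply Lemma~\ref{lemm:keylm1}, which rewrites the inner conditional expectation as $\E[\df<\rvDs_l=u>\mid \conf^{(t)}_v]$; substituting this expression inside the outer conditional expectation yields precisely the second branch of the claim.

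I do not expect a genuine obstacle here, since both cases reduce to lining up indices and quoting a standard property. The one point that needs care is the containment of the conditioning collection $\rvDs_0,\ldots,\rvDs_\ell$ inside $\rvDs_0,\ldots,\rvDs_{l-1}$ used in the tower step, which is exactly where the hypothesis $l\geq \ell+1$ enters; note also that $\rvDs_{l-1}$ is well defined, as $\ell\geq 0$ forces $l\geq 1$ in this case.
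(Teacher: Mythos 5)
Your proposal is correct and matches the paper's own proof: the first branch is the "obvious" measurability case, and the second branch uses exactly the same tower-property step (valid since $l-1\geq\ell$) followed by Lemma~\ref{lemm:keylm1} applied to the inner conditional expectation. Your additional remarks on the degenerate branch $\rvDs_l=-1$ and on why $\rvDs_{l-1}$ is well defined are just slightly more careful versions of what the paper leaves implicit.
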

		\begin{proof}
			For the case $l\leq \ell$, the claim is true obviously.
			If $l\geq \ell +1$, it holds that $l-1\geq \ell$ and from the chain rule of the conditional expectation, we have
			\begin{eqnarray*}
				\E\left[\df<\rvDs_l=u>\mid  \rvDs_0, \ldots, \rvDs_\ell \right]
				&=&\E\left[\E\left[\df<\rvDs_l=u>\mid  \rvDs_0, \ldots, \rvDs_{l-1}\right]\mid  \rvDs_0, \ldots, \rvDs_\ell \right].
			\end{eqnarray*}
			Thus we obtain the claim from Lemma~\ref{lemm:keylm1}.
		\end{proof}
		\begin{lemma}
			For any $t\in \{0,1,\ldots, T-1\}$, $v\in \{0,1,\ldots, \Nov - 1\}$ and $k\in \{0,1,\ldots, \Not-1\}$, let $l=\Not \Nov t+\Not v+k$. Then, for any $\ell\geq \Not \Nov t-1$, it holds that
			\begin{eqnarray*}
				\E\left[\E\left[\df<\rvDs_l=u>\mid \conf^{(t)}_v\right]\mid \rvDs_0, \rvDs_1, \ldots, \rvDs_{\ell} \right]
				&=&\E\left[\df<\rvDs_l=u>\mid \conf^{(t)}_v\right].
			\end{eqnarray*}
		\label{lemm:keydiffconf}
		\end{lemma}
		\begin{proof}
			Since $\E[\df<\rvDs_l=u>\mid \conf^{(t)}_v]$ is a function of $\conf^{(t)}_v$ (Observation~\ref{obs:exD}) 
			and $\conf^{(t)}_v$ is determined by $\rvDs_{0},\ldots, \rvDs_{\Not \Nov t-1}$ (Observation~\ref{obs:confM}), we obtain the claim.
		\end{proof}
		\begin{proof}[Proof of Lemma~\ref{lemm:mainM}]
			For any $t\in \{0,1,\ldots, T-1\}$, $v\in \{0,1,\ldots, \Nov - 1\}$ and $k\in \{0,1,\ldots, \Not -1\}$, let $l=\Not \Nov t+\Not v+k$.
			Then, for any $u\in \trN_v$ and $\ell \in \{0,1,\ldots, \Not \Nov T-1\}$, let
			\begin{eqnarray}
			\difm_u(l,\ell)
			&\defeq&\left(\E\left[\df<\rvDs_l=u>\mid \rvDs_0, \ldots, \rvDs_\ell\right]-\E\left[\E\left[\df<\rvDs_l=u>\mid \conf^{(t)}_v\right]\mid \rvDs_0, \ldots, \rvDs_\ell\right]\right)\nonumber \\
			&&\ -\ \left(\E\left[\df<\rvDs_l=u>\mid \rvDs_0, \ldots, \rvDs_{\ell-1}\right]-\E\left[\E\left[\df<\rvDs_l=u>\mid \conf^{(t)}_v\right]\mid \rvDs_0, \ldots, \rvDs_{\ell-1}\right]\right). \label{def:diffell}
			\end{eqnarray}
			From Lemma~\ref{lemm:discframe} and the definition of $\rvDs_\ell$ (\ref{def:rvDs}),  it holds that
			\begin{eqnarray}
				\lefteqn{ \E\left[\conf^{(T)}_w-(\conf^{(0)}P^T)_w\mid \rvDs_0, \ldots, \rvDs_\ell \right]
				-\E\left[\conf^{(T)}_w-(\conf^{(0)}P^T)_w\mid \rvDs_0, \ldots, \rvDs_{\ell-1} \right]}\nonumber \\
				&=& \sum_{t=0}^{T-1}\sum_{v=0}^{\Nov - 1}\sum_{k=0}^{\Not-1}\sum_{u\in \trN_v} \difm_u(\Not \Nov t+\Not v+k,\ell) \left(\trM^{\tT-t-1}_{u,w}-\trM^{\tT-t-1}_{v,w}\right).
				\label{eq:sumofkome}
			\end{eqnarray}
			To obtain the claim, we show that $\difm_u(\Not \Nov t+\Not v+k,\ell)=\difm_u(l,\ell)=0$ for any $l\neq \ell$ firstly. 
			We consider this by the following \textbf{case 1.} ($l\leq \ell-1$) and  \textbf{case 2.} ($l\geq \ell+1$). 
	
			\noindent \textbf{case 1. $l\leq \ell-1$:}
			In this case, it holds that both $l\leq \ell$ and $l\leq \ell-1$.
			Thus from Lemma~\ref{lemm:keydiffdest}, 
			\begin{eqnarray}
			\E\left[\df<\rvDs_l=u>\mid \rvDs_0, \ldots, \rvDs_\ell\right]&=&\df<\rvDs_l=u>\ \ \ {\rm and } \label{eq:exdd1}\\
			\E\left[\df<\rvDs_l=u>\mid \rvDs_0, \ldots, \rvDs_{\ell-1}\right]&=&\df<\rvDs_l=u> 
			\end{eqnarray}
			hold.
			Note that it holds that both $\ell\geq \Not \Nov t-1$ and $\ell-1\geq \Not \Nov t-1$ since $\ell -1 \geq l = \Not \Nov t+\Not v+k > \Not \Nov t-1$, from Lemma~\ref{lemm:keydiffconf}, we have
			\begin{eqnarray}
			\E\left[\E\left[\df<\rvDs_l=u>\mid \conf^{(t)}_v\right]\mid \rvDs_0, \ldots, \rvDs_\ell\right]&=&\E\left[\df<\rvDs_l=u>\mid \conf^{(t)}_v\right]\ \ \ {\rm and }\\
			\E\left[\E\left[\df<\rvDs_l=u>\mid \conf^{(t)}_v\right]\mid \rvDs_0, \ldots, \rvDs_{\ell-1}\right]&=&\E\left[\df<\rvDs_l=u>\mid \conf^{(t)}_v\right]. \label{eq:exdd2}
			\end{eqnarray}
			Thus $\difm_u(l,\ell)=0$ from \eqref{eq:exdd1}--\eqref{eq:exdd2} and \eqref{def:diffell} in this case. 
			
			\noindent \textbf{case 2. $l\geq \ell+1$:}		
			In this case, it holds that both $l\geq \ell+1$ and $l\geq (\ell-1)+1$.
			Using Lemma~\ref{lemm:keydiffdest}, we have
			\begin{eqnarray}
			\E\left[\df<\rvDs_l=u>\mid \rvDs_0, \ldots, \rvDs_\ell\right]&=&\E\left[\E\left[\df<\rvDs_l=u>\mid \conf^{(t)}_v\right]\mid \rvDs_0, \ldots, \rvDs_{\ell} \right]\ \ \ {\rm and }\label{eq:exdd3} \\
			\E\left[\df<\rvDs_l=u>\mid \rvDs_0, \ldots, \rvDs_{\ell-1}\right]&=&\E\left[\E\left[\df<\rvDs_l=u>\mid \conf^{(t)}_v\right]\mid \rvDs_0, \ldots, \rvDs_{\ell-1} \right].\label{eq:exdd4}
			\end{eqnarray}
			Thus $\difm_u(l,\ell)=0$  from \eqref{eq:exdd3}, \eqref{eq:exdd4} and \eqref{def:diffell}  in this case. 
			
			From these discussion, all term such that $l\neq \ell$ in \eqref{eq:sumofkome} is $0$, hence we have
			\begin{eqnarray}
				\eqref{eq:sumofkome}&=&\sum_{u\in \trN_\fv} \difm_u(\ell,\ell) \left(\trM^{\tT-\ft-1}_{u,w}-\trM^{\tT-\ft-1}_{\fv,w}\right).
			\end{eqnarray}
			Note that $\ell=\Not \Nov \ft +\Not \fv +\fk$ from the assumption of the Lemma~\ref{lemm:mainM}.
			We conclude the proof by showing that $\difm_u(\ell,\ell)=\df<\rvDs_\ell=u>-\E[\df<\rvDs_\ell=u>\mid \conf^{(\ft)}_\fv]$. 
			%
			We have
			\begin{eqnarray}
			\E\left[\df<\rvDs_\ell=u>\mid \rvDs_0, \ldots, \rvDs_\ell\right]&=&\df<\rvDs_\ell=u>\ \ \ {\rm and}\label{eq:exdd5} \\
			\E\left[\df<\rvDs_\ell=u>\mid \rvDs_0, \ldots, \rvDs_{\ell-1}\right]&=&\E\left[\df<\rvDs_\ell=u>\mid \conf^{(\ft)}_\fv\right],
			\end{eqnarray}
			where the second equality holds from Lemma~\ref{lemm:keylm1}.
			Note that both $\ell\geq \Not \Nov \ft-1$ and $\ell-1\geq \Not \Nov \ft-1$ hold since $\ell -1  = \Not \Nov \ft+\Not \fv+\fk-1 \geq \Not \Nov \ft-1$. 
			From these facts and Lemma~\ref{lemm:keydiffconf},  we have
			\begin{eqnarray}
			\E\left[\E\left[\df<\rvDs_\ell=u>\mid \conf^{(\ft)}_\fv\right]\mid \rvDs_0, \ldots, \rvDs_\ell\right]&=&\E\left[\df<\rvDs_\ell=u>\mid \conf^{(\ft)}_\fv\right]\ \ \ {\rm and}\\
			\E\left[\E\left[\df<\rvDs_\ell=u>\mid \conf^{(\ft)}_\fv\right]\mid \rvDs_0, \ldots, \rvDs_{\ell-1}\right]&=&\E\left[\df<\rvDs_\ell=u>\mid \conf^{(\ft)}_\fv\right].\label{eq:exdd6} 
			\end{eqnarray}
			Thus $\difm_u(\ell,\ell)=\df<\rvDs_\ell=u>-\E[\df<\rvDs_\ell=u>\mid \conf^{(\ft)}_\fv]$ from \eqref{eq:exdd5}--\eqref{eq:exdd6} and \eqref{def:diffell}, and we obtain the claim.
		\end{proof}

%
\section{General upper bound of the local 2-divergence}\label{sec:local2}
%
	This section gives a general upper bound of $\Psi_2(P)$ for any irreducible, {\em reversible} and {\em lazy} chain.
	Let $\pi\in \mathbb{R}_{\geq 0}$ denotes the {\em stationary distribution} of the round matrix $P$, i.e. the probability distribution such that $\pi P=\pi$ holds.
	A $P$ is called {\em reversible} if $\pi_vP_{v,u}=\pi_uP_{u,v}$ holds for any $v,u\in V$. 
	A $P$ is called {\em lazy} if $\trM_{v,v}\geq 1/2$ holds for any $v\in V$.

	For a reversible $P$, its stationary distribution $\pi$ and a vector $f\in \mathbb{R}^\Nov$, 
	the {\em Dirichlet form} ${\cal E}(f)$ is defined as follows.
	\begin{eqnarray}
	\label{def:dirichlet}
	{\cal E}(f)&\defeq &\frac{1}{2}\sum_{(v,u)\in V\times V}(f_v-f_u)^2\pi_vP_{v,u}.
	\end{eqnarray}
	Dirichlet form is concerned with many techniques to estimating the key measures of Markov chains such as eigenvalues (cf. \cite{AF02, LPW08}).
	Now, we consider the connection between the Dirichlet form and the local 2-divergence.
	Since there exists a $w\in V$ such that $\bigl(\Psi_2(P)\bigr)^2=\sum_{t=0}^\infty \sum_{(v,u)\in V\times V:P_{v,u}>0}(P^t_{v,w}-P^t_{u,w})^2$ from Definition~\ref{def:local2}, 
	\begin{eqnarray}
	\bigl(\Psi_2(P)\bigr)^2
	&\leq &\sum_{t=0}^\infty \sum_{(v,u)\in V\times V}(P^t_{v,w}-P^t_{u,w})^2\frac{\pi_vP_{v,u}}{\min_{(v,u)\in V\times V:P_{v,u}>0}\pi_vP_{v,u}}\nonumber \\
	&=&\frac{2}{\min_{(v,u)\in V\times V:P_{v,u}>0}\pi_vP_{v,u}} \sum_{t=0}^\infty {\cal E}(P^t_{\cdot, w}).\label{eq:dirichlet}
	\end{eqnarray}
	holds by taking $f_x=P^t_{x,w}$.
	Now, we introduce the following main lemma to give an upper bound of the local 2-divergence.
	\begin{lemma}[Dirichlet form of $P^t_{\cdotp,w}$]
		\label{lemm:dirichlet}
		For any reversible $P$, $w\in V$ and $t\geq 0$, it holds that
		\begin{eqnarray*}
			 {\cal E}(P^t_{\cdot, w})&=&\pi_w(\trM^{2t}_{w,w}-\trM^{2t+1}_{w,w}).
		\end{eqnarray*}
	\end{lemma}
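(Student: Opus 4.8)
The plan is to compute the Dirichlet form ${\cal E}(P^t_{\cdot,w})$ directly from its definition \eqref{def:dirichlet} and massage the sum until the reversibility relation $\pi_v P_{v,u}=\pi_u P_{u,v}$ collapses everything into a single diagonal entry of a power of $P$. Writing $f_x = P^t_{x,w}$, I start by expanding the square in
\begin{eqnarray*}
{\cal E}(P^t_{\cdot,w}) &=& \frac{1}{2}\sum_{(v,u)\in V\times V}\left(P^t_{v,w}-P^t_{u,w}\right)^2\pi_v P_{v,u}.
\end{eqnarray*}
The square gives three groups of terms: $(P^t_{v,w})^2$, $(P^t_{u,w})^2$, and the cross term $-2P^t_{v,w}P^t_{u,w}$, each weighted by $\pi_v P_{v,u}$.

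\textbf{Handling the squared terms.} For the first group I would use $\sum_u P_{v,u}=1$ (stochasticity) to kill the $u$-sum, leaving $\frac{1}{2}\sum_v (P^t_{v,w})^2\pi_v$. For the second group I would invoke reversibility to rewrite $\pi_v P_{v,u}=\pi_u P_{u,v}$, then sum over $v$ using $\sum_v P_{u,v}=1$, which yields $\frac{1}{2}\sum_u (P^t_{u,w})^2\pi_u$. These two contributions are identical after relabeling, so together they give $\sum_v \pi_v (P^t_{v,w})^2$. The key move here is recognizing that reversibility is exactly what makes the ``outgoing'' and ``incoming'' weightings symmetric.

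\textbf{Handling the cross term.} This is where the diagonal entry should emerge. The cross term is $-\sum_{(v,u)}\pi_v P_{v,u}P^t_{v,w}P^t_{u,w}$. I would again use reversibility on $\pi_v$ to turn $\pi_v P^t_{v,w}$ into something symmetric: since $P$ is reversible, so is $P^t$, giving $\pi_v P^t_{v,w}=\pi_w P^t_{w,v}$. Substituting, the cross term becomes $-\pi_w\sum_{(v,u)}P^t_{w,v}P_{v,u}P^t_{u,w}=-\pi_w\sum_u\left(\sum_v P^t_{w,v}P_{v,u}\right)P^t_{u,w}=-\pi_w\sum_u P^{t+1}_{w,u}P^t_{u,w}$. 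Applying reversibility once more to $\pi_w P^t_{u,w}=\pi_u P^t_{w,u}$ wait---rather, I would directly recognize $\sum_u P^{t+1}_{w,u}P^t_{u,w}=(P^{t+1}P^t)_{w,w}=P^{2t+1}_{w,w}$, so the cross term equals $-\pi_w P^{2t+1}_{w,w}$. For the squared term I would similarly rewrite $\sum_v \pi_v(P^t_{v,w})^2=\pi_w\sum_v P^t_{w,v}P^t_{v,w}=\pi_w(P^t P^t)_{w,w}=\pi_w P^{2t}_{w,w}$, again using $\pi_v P^t_{v,w}=\pi_w P^t_{w,v}$.

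\textbf{The main obstacle} will be keeping the reversibility substitutions straight---the identity I rely on throughout is that reversibility of $P$ lifts to $\pi_v P^t_{v,u}=\pi_u P^t_{u,v}$ for every power $t$, and I should state this as a preliminary observation (it follows by induction, or from $D_\pi P$ being symmetric where $D_\pi=\mathrm{diag}(\pi)$, so $D_\pi P^t$ is symmetric too). Once that lifted reversibility is in hand, each of the three term-groups reduces to a diagonal entry of $P^{2t}$ or $P^{2t+1}$ times $\pi_w$, and combining them gives $\pi_w P^{2t}_{w,w}-\pi_w P^{2t+1}_{w,w}=\pi_w(P^{2t}_{w,w}-P^{2t+1}_{w,w})$, which is exactly the claim. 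I do not expect to need laziness for this lemma itself—laziness is only used afterward (for the monotonicity $P^{2t+1}_{w,w}\le P^{2t}_{w,w}$ and for summing the telescoping-like series in Theorem~\ref{thm:local2}), so I would not invoke it here.
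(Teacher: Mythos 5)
Your proposal is correct and follows essentially the same route as the paper: expand the square in the Dirichlet form, reduce the two squared-term sums to $\pi_w P^{2t}_{w,w}$ via stochasticity and the lifted reversibility $\pi_v P^t_{v,w}=\pi_w P^t_{w,v}$, and collapse the cross term to $\pi_w P^{2t+1}_{w,w}$. Your remark that laziness is not needed here (only later, for the monotonicity and the telescoping sum in Theorem~\ref{thm:local2}) also matches the paper, whose lemma assumes only reversibility.
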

	\begin{proof}
			From the definition of the Dirichlet form, it holds that
			\begin{eqnarray}
				2{\cal E}(P^t_{\cdot, w}) 
				&=&\sum_{v\in V}\sum_{u\in V}\pi_v\trM_{v,u}(\trM^t_{v,w}-\trM^t_{u,w})^2 \nonumber \\
				&=& \sum_{v\in V}\sum_{u\in V}\pi_v\trM_{v,u}(\trM^t_{v,w})^2 + \sum_{v\in V}\sum_{u\in V}\pi_v\trM_{v,u}(\trM^t_{u,w})^2 
				-2 \sum_{v\in V}\sum_{u\in V}\pi_v\trM_{v,u}\trM^t_{v,w}\trM^t_{u,w}.\label{eq:local2eq1} 
			\end{eqnarray}
			Then, from the reversibility of $P$,
			\begin{eqnarray}
				\lefteqn{\sum_{v\in V}\sum_{u\in V}\pi_v\trM_{v,u}(\trM^t_{v,w})^2+\sum_{v\in V}\sum_{u\in V}\pi_v\trM_{v,u}(\trM^t_{u,w})^2  }\nonumber \\
				&=&\sum_{v\in V}\sum_{u\in V}\pi_v\trM_{v,u}(\trM^t_{v,w})^2+\sum_{v\in V}\sum_{u\in V}\pi_u\trM_{u,v}(\trM^t_{u,w})^2
				\ =\ \sum_{v\in V}\pi_v(\trM^t_{v,w})^2+\sum_{u\in V}\pi_u(\trM^t_{u,w})^2 \nonumber \\
				&=&\sum_{v\in V}\pi_w\trM^t_{w,v}\trM^t_{v,w}+\sum_{u\in V}\pi_w\trM^t_{w,u}\trM^t_{u,w} 
				\ =\ 2\pi_w\trM^{2t}_{w,w}\ \ \ {\rm and} 
				\label{eq:local2eq2} 
			\end{eqnarray}
			and
			\begin{eqnarray}
				2\sum_{v\in V}\sum_{u\in V}\pi_v\trM_{v,u}\trM^t_{v,w}\trM^t_{u,w}
				&=&2\pi_w\sum_{v\in V}\sum_{u\in V}\trM^t_{w,v}\trM_{v,u}\trM^t_{u,w} 
				\ =\ 2\pi_wP^{2t+1}_{w,w}\label{eq:local2eq3}
			\end{eqnarray}
			hold. Combining \eqref{eq:local2eq1}-\eqref{eq:local2eq3}, we obtain the claim.
		\end{proof}
		\begin{proof}[Proof of Theorem~\ref{thm:local2}]
			Combining \eqref{eq:dirichlet}, Lemma~\ref{lemm:dirichlet} and the assumption of laziness of $P$, 
			\begin{eqnarray*}
				\sum_{t=0}^\infty {\cal E}(P^t_{\cdot, w})
				&=&\pi_w \sum_{t=0}^{\infty} (P^{2t}_{w,w}-P^{2t+1}_{w,w}) 
				\ \leq\ \pi_w \sum_{t=0}^{\infty} (P^{2t}_{w,w}-P^{2t+2}_{w,w}) 
				\ \leq\ \pi_w
				\label{eq:eq11}
			\end{eqnarray*}
			holds, thus we obtain the Theorem~\ref{thm:local2}.
			Note that $P^{t}_{w,w}\geq P^{t+1}_{w,w}$ holds for any lazy $P$ (cf. Proposition 10.25 of \cite{LPW08}).
		\end{proof}
\section{Upper bound on the discrepancy for symmetric round matrices}\label{sec:final}
	Now, we conclude this paper stating the proofs of Theorems~\ref{thm:reg} and~\ref{thm:gen}.
	First, we show the following general theorem according to the discrepancy for any symmetric round matrix.
	\begin{theorem}[Result for symmetric matrices]
		\label{thm:maindisc}
		Suppose that $P$ is irreducible and symmetric.
		Then, for any $\conf^{(0)}$ and for each $T\geq \frac{\log(4\Disc(\conf^{(0)}) \Nov)}{1-\sE}$, 
		$\conf^{(T)}$ of Algorithm~2 satisfies that 
		\begin{eqnarray*}
			\Pro\left[ \Disc(\conf^{(T)}) \leq 9\Psi_2(\trM)\sqrt{\log \Nov} \right] \geq 1-\frac{2}{\Nov}.
		\end{eqnarray*}
	\end{theorem}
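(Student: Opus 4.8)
The plan is to compare the discrete configuration $\conf^{(T)}$ against the continuous diffusion $\conf^{(0)}P^{T}$ and then use the triangle inequality. For any pair $v,u\in V$ I would write
\[
\conf^{(T)}_v-\conf^{(T)}_u
=\bigl(\conf^{(T)}_v-(\conf^{(0)}P^{T})_v\bigr)
+\bigl((\conf^{(0)}P^{T})_v-(\conf^{(0)}P^{T})_u\bigr)
+\bigl((\conf^{(0)}P^{T})_u-\conf^{(T)}_u\bigr),
\]
so that
\[
\bigl|\conf^{(T)}_v-\conf^{(T)}_u\bigr|
\le \bigl|\conf^{(T)}_v-(\conf^{(0)}P^{T})_v\bigr|
+\Disc(\conf^{(0)}P^{T})
+\bigl|\conf^{(T)}_u-(\conf^{(0)}P^{T})_u\bigr|.
\]
Here the two outer terms measure the deviation between the discrete and continuous diffusions, while the middle term is exactly the discrepancy of the continuous diffusion started from $\conf^{(0)}$.

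Next I would estimate the three pieces. The outer terms are handled by Theorem~\ref{thm:main}: it controls $\max_{w\in V}\bigl|\conf^{(T)}_w-(\conf^{(0)}P^{T})_w\bigr|\le 4\Psi_2(P)\sqrt{\log \Nov}$ on a \emph{single} event of probability at least $1-2/\Nov$. The crucial point is that working on this one maximal event simultaneously bounds the deviation at every vertex, hence both outer terms (for all $v$ and all $u$) by $4\Psi_2(P)\sqrt{\log \Nov}$; no union bound over pairs is needed, so the failure probability remains $2/\Nov$. For the middle term I would apply Proposition~\ref{prop:cont} with $\varepsilon=1$ (identifying the integer initial load vector $\conf^{(0)}$ with a real configuration $\cconf^{(0)}$): the hypothesis $T\ge \log(4\Disc(\conf^{(0)})\Nov)/(1-\sE)$ is precisely the condition of that proposition at $\varepsilon=1$, giving $\Disc(\conf^{(0)}P^{T})\le 1$. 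Combining the three bounds, on the good event we get
\[
\Disc(\conf^{(T)})\le 8\Psi_2(P)\sqrt{\log \Nov}+1.
\]

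Finally I would absorb the additive constant into the leading term. Since $P$ is irreducible and symmetric, pick $w$ with a neighbor $u^{*}\neq w$; the $t=0$ summand of Definition~\ref{def:local2} already contributes $1$ from the directed pair $(w,u^{*})$ and $1$ from $(u^{*},w)$, whence $\Psi_2(P)\ge\sqrt2$, and together with $\sqrt{\log\Nov}\ge\sqrt{\log 2}$ this yields $\Psi_2(P)\sqrt{\log\Nov}\ge 1$. Therefore $8\Psi_2(P)\sqrt{\log\Nov}+1\le 9\Psi_2(P)\sqrt{\log\Nov}$, and taking the maximum over $v,u$ gives $\Disc(\conf^{(T)})\le 9\Psi_2(P)\sqrt{\log\Nov}$ with probability at least $1-2/\Nov$. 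The argument is essentially bookkeeping of three estimates, so I do not anticipate a genuine obstacle; the only points meriting care are that the $2/\Nov$ failure probability must be sourced from the single maximal event of Theorem~\ref{thm:main} rather than from separately controlling $v$ and $u$, and the minor verification $\Psi_2(P)\sqrt{\log\Nov}\ge 1$ that legitimizes replacing the constant $8$ (plus the additive $1$) by $9$.
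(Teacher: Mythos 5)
Your proposal is correct and follows essentially the same route as the paper: the same three-term triangle-inequality decomposition, with the two outer terms bounded by Theorem~\ref{thm:main} on its single high-probability event and the middle term bounded by Proposition~\ref{prop:cont} with $\varepsilon=1$. In fact you are more careful than the paper's own (very terse) proof, since you explicitly justify absorbing the additive $1$ via $\Psi_2(P)\sqrt{\log \Nov}\geq\sqrt{2\log 2}\geq 1$, which is what legitimizes the constant $9$.
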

	\begin{proof}
		Since $\conf^{(T)}_v-\conf^{(T)}_u=(\conf^{(T)}_v-(\conf^{(0)}P^T)_v)+((\conf^{(0)}P^T)_v-(\conf^{(0)}P^T)_u)+((\conf^{(0)}P^T)_u-\conf^{(T)}_u)$, we have
		\begin{eqnarray*}
			\left|\conf^{(T)}_v-\conf^{(T)}_u\right|&\leq &
			\left|\conf^{(T)}_v-(\conf^{(0)}P^T)_v\right|+\left|(\conf^{(0)}P^T)_v-(\conf^{(0)}P^T)_u\right|+\left|(\conf^{(0)}P^T)_u)-\conf^{(T)}_u\right|\\
			&\leq &2\Disc(\conf^{(T)})+\left|(\conf^{(0)}P^T)_v-(\conf^{(0)}P^T)_u\right|
		\end{eqnarray*}
		Combining Proposition~\ref{prop:cont} and Theorem~\ref{thm:main}, we obtain the claim.
	\end{proof}
	\begin{proof}[Proof of Theorem~\ref{thm:reg}]
		Since Algorithm~1 is Algorithm~2 according to the transition matrix of the lazy random walk on $G=(V,E)$, i.e.
		$P$ such that $P_{v,u}=1/(2d)$ for any $\{v,u\}\in E$, $P_{v,v}=1/2$ for any $v\in V$, and $P_{v,u}=0$ for any $\{v,u\}\notin E$.
		For this $P$, since $\min_{(v,u)\in V\times V:P_{v,u}>0}P_{v,u}=1/(2d)$, 
		$\Psi_2(P)\leq 2\sqrt{d}$ holds from Corollary~\ref{cor:local2}. Thus we obtain the claim by Theorem~\ref{thm:maindisc}.
	\end{proof}
	\begin{proof}[Proof of Theorem~\ref{thm:gen}]
		From the definition of $\trMM$ in Section~\ref{sec:genresult}, this chain is lazy and symmetric.
		Since $\min_{(v,u)\in V\times V:P_{v,u}>0}P_{v,u}=1/(2d_{\max})$,
		$\Psi_2(P)\leq 2\sqrt{d_{\max}}$ holds from Corollary~\ref{cor:local2}. Thus we obtain the claim by Theorem~\ref{thm:maindisc}.
	\end{proof}
\section*{Acknowledgements}%
This work is supported by JSPS KAKENHI Grant Number 17H07116.
\if0
	\begin{theorem}[Discrepancy between discrete and continuous diffusions]
				\label{thm:main}
				For any initial configuration of loads $\conf^{(0)}$, for any round matrix $\trM$ and for each time $T\in \mathbb{Z}_{\geq 0}$, 
				the configuration of loads at time $T$ of Algorithm~2 according to $P$ satisfies that 
				$
					\Pro\left[ \max_{w\in V}\Bigl|\conf^{(\tT)}_w-(\conf^{(0)}\trM^\tT)_w\Bigr| \leq 4\Psi_2(\trM)\sqrt{\log \Nov} \right] \geq 1-\frac{2}{\Nov}.
				$
			\end{theorem}
\fi

\if0
\subsubsection{Lazy chain on regular graphs}
Let $G=(V,E)$ be an undirected and connected graph.
Additionally, we assume $G$ is $d$-regular graph.
Then, we consider the following transition matrix
\begin{eqnarray}
(\trML)_{v,u}=
\begin{cases}
\frac{1}{2d} &(\mathrm{if}\ (v,u)\in E) \\
\frac{1}{2} &(\mathrm{if}\ v=u) \\
0 & (\mathrm{othrwise}) 
\end{cases}.
\end{eqnarray}
Then, the following corollary is obtained from Theorem~\ref{thm:main} and Theorem~\ref{thm:local2}.
\begin{corollary}
\label{thm:reg1}
Suppose that $G=(V,E)$ is an arbitrary $d$-regular graph and the transition matrix is $\trML$.
Then, for each $T\in \mathbb{Z}_{\geq 0}$, it holds that
\begin{eqnarray*}
\Pro\left[ \max_{w\in V}|\conf^{(T)}_w-(\conf^{(0)}\trML^T)_w| \leq 8\sqrt{d\log \Nov } \right]\geq 1-\frac{2}{n}.
\end{eqnarray*}
\end{corollary}
Combining Corollary~\ref{thm:reg1} and Proposition~\ref{prop:cont}, we obtain Corollary~\ref{thm:reg}.
\subsubsection{Metropolis chain on general graphs}
Now, we consider an arbitrary undirected and connected graph $G=(V,E)$.
We do not assume the regularity of $G$.
Let $d_v$ be the degree of $v\in V$, i.e., $d_v=|\{u\in V\mid (v,u)\in E\}|$.
Then, we consider the following transition matrix on $V$
\begin{eqnarray}
(\trMM)_{v,u}=
\begin{cases}
\frac{1}{2}\min\left\{ \frac{1}{d_v}, \frac{1}{d_u} \right\} &(\mathrm{if}\ (v,u)\in E) \\
1-\sum_{u: (v,u)\in E} (\trMM)_{v,u} &(\mathrm{if}\ v=u) \\
0 & (\mathrm{othrwise}) 
\end{cases}.
\end{eqnarray}
$(\trMM)$ is known as the Metropolis chain~\cite{NOSY10}. 
Let $d_{\max}=\max_{v\in V}d_v$. Then, the following corollary is obtained from Theorem~\ref{thm:main} and Theorem~\ref{thm:local2}.
\begin{corollary}
\label{thm:gen1}
Suppose that $G=(V,E)$ is an arbitrary graph and the transition matrix is $\trMM$.
Then, for each $T\in \mathbb{Z}_{\geq 0}$, it holds that
\begin{eqnarray*}
\Pro\left[ \max_{w\in V}|\conf^{(T)}_w-(\conf^{(0)}\trMM^T)_w| \leq 8\sqrt{d_{\max}\log \Nov } \right]\geq 1-\frac{2}{n}.
\end{eqnarray*}
\end{corollary}
Combining Corollary~\ref{thm:gen1} and Proposition~\ref{prop:cont}, we obtain Corollary~\ref{thm:gen}.
\fi

%



\bibliographystyle{abbrv}

\appendix
\section{Proof of Proposition~\ref{prop:cont}}\label{sec:continuous2}
		\if0\begin{proposition}[The discrepancy of the continuous diffusion algorithm \cite{SS94, RSW98}]
		\label{prop:cont}
			Suppose that $P$ is irreducible and symmetric.
			Then, for any $\cconf^{(0)}\in \mathbb{R}^n_{\geq 0}$, $w\in V$, $\varepsilon\in (0,1)$ and 
			$
				T \geq \frac{1}{1-\sE}\log\left(\frac{4\Disc(\cconf^{(0)}) \Nov}{\varepsilon}\right), 
			$
			$
				\Disc(\cconf^{(T)})= \Disc(\cconf^{(0)} P^T)\leq \varepsilon
			$ holds.
		\end{proposition}\fi
			\begin{proof}
			We have
			\begin{eqnarray*}
				\left| (\cconf^{(0)}P^T)_w -\frac{\Not}{\Nov} \right|
				&=&\left| \sum_{v\in V}\cconf^{(0)}_v\left(\trM^T_{v,w}-\frac{1}{\Nov}\right) \right| 
				\ =\ \left| \sum_{v\in V}\cconf^{(0)}_v\left(\trM^T_{w,v}-\frac{1}{\Nov}\right) \right| \\
				&=&\left| \sum_{v\in V}(\cconf^{(0)}_v-\cconf^{(0)}_x)\left(\trM^T_{w,v}-\frac{1}{\Nov}\right) \right| 
				\ \leq \ 2\Disc(\cconf^{(0)}) \cdotp \frac{1}{2}\sum_{v\in V}\left|\trM^{T}_{w,v}-\frac{1}{\Nov}\right|.
			\end{eqnarray*}	
			Note that we used the assumption of the symmetry of $\trM$ and $\sum_{v\in V}\cconf^{(0)}_x\left(\trM^T_{w,v}-\frac{1}{\Nov}\right)=0$ for any $x\in V$.
			Using Theorem 12.4 in \cite{LPW08},  
			$\frac{1}{2}\sum_{v\in V}\left|\trM^{T}_{w,v}-\frac{1}{\Nov}\right|\leq \varepsilon'$
			holds for any $T\geq \frac{1}{1-\sE} \log \left(\frac{\Nov }{\varepsilon'}\right)$.
			Thus, with the assumption of $T$, we have $\frac{1}{2}\sum_{v\in V}\left|\trM^{T}_{w,v}-\frac{1}{\Nov}\right|\leq \frac{\varepsilon}{4\Disc(\cconf^{(0)})}$.
			Hence we have 
			\begin{eqnarray*}
				\left| (\cconf^{(0)}P^T)_v -(\cconf^{(0)}P^T)_u \right|
				&=&\left| \left((\cconf^{(0)}P^T)_v-\frac{\Not}{\Nov}\right) - \left(\frac{\Not}{n}-(\cconf^{(0)}P^T)_u\right) \right| \\
				&\leq &\left| (\cconf^{(0)}P^T)_v -\frac{\Not}{\Nov} \right|+\left| (\cconf^{(0)}P^T)_u -\frac{\Not}{\Nov} \right|
				\ \leq \ \frac{\varepsilon}{2}+\frac{\varepsilon}{2} =\varepsilon
			\end{eqnarray*}	
			for any $v,u\in V$, and we obtain the claim.
			\end{proof}
\section{Concentration inequality}\label{sec:concentration}
		\begin{theorem}[Asuma-Hoeffding Inequality, \cite{MU17}]
			Let $X_0, \ldots, X_n$ be a martingale such that
			\begin{eqnarray*}
			|X_k-X_{k-1}|\leq c_k.
			\end{eqnarray*}
			Then, for all $t\geq 1$ and any $\lambda>0$, 
			\begin{eqnarray*}
				\Pro\left[|X_t-X_0|\geq \lambda \right]\leq 2\exp\left[-\frac{\lambda^2}{2\sum_{k=1}^t(c_k)^2}\right].
			\end{eqnarray*}
		\end{theorem}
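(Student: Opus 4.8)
The plan is to prove the inequality by the standard exponential-moment (Chernoff) method applied to the martingale differences, combined with a one-variable convexity estimate on the moment generating function of each bounded increment. Write $D_k \defeq X_k - X_{k-1}$ for the martingale differences and let $\mathcal{F}_{k-1}$ denote the information generated by $X_0,\ldots,X_{k-1}$. The defining martingale property gives $\E[D_k \mid \mathcal{F}_{k-1}] = 0$, while the hypothesis supplies the conditional bound $|D_k|\leq c_k$. These two facts are the only structural inputs the argument uses.

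First I would establish the key auxiliary estimate (Hoeffding's lemma): for any random variable $Y$ with $\E[Y\mid \mathcal{F}]=0$ and $|Y|\leq c$, one has $\E[\exp(sY)\mid \mathcal{F}]\leq \exp(s^2c^2/2)$ for every $s\in\mathbb{R}$. This follows from convexity of $x\mapsto \exp(sx)$ on $[-c,c]$: writing $\exp(sy)\leq \frac{c-y}{2c}\exp(-sc)+\frac{c+y}{2c}\exp(sc)$ for $|y|\leq c$ and taking conditional expectation annihilates the linear term, leaving $\cosh(sc)\leq \exp(s^2c^2/2)$, where the final inequality is verified by comparing Taylor coefficients. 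I expect this conditional MGF bound to be the main obstacle, since it is the only place genuine analysis (rather than bookkeeping) enters.

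Next I would apply the Chernoff bound and peel off the increments one at a time. For any $s>0$, Markov's inequality gives $\Pro[X_t - X_0 \geq \lambda]\leq \exp(-s\lambda)\,\E[\exp(s(X_t-X_0))]$. Conditioning on $\mathcal{F}_{t-1}$, using $X_t - X_0 = (X_{t-1}-X_0)+D_t$ where the first summand is $\mathcal{F}_{t-1}$-measurable, and invoking the auxiliary estimate on $D_t$, I obtain $\E[\exp(s(X_t-X_0))]\leq \E[\exp(s(X_{t-1}-X_0))]\exp(s^2c_t^2/2)$. Iterating this recursion down to $X_0$ yields $\E[\exp(s(X_t-X_0))]\leq \exp\!\bigl(s^2\sum_{k=1}^t c_k^2/2\bigr)$, hence $\Pro[X_t-X_0\geq \lambda]\leq \exp\!\bigl(-s\lambda + s^2\sum_{k=1}^t c_k^2/2\bigr)$.

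Finally I would optimize the free parameter by choosing $s=\lambda/\sum_{k=1}^t c_k^2$, which minimizes the quadratic exponent and gives the one-sided tail bound $\Pro[X_t-X_0\geq\lambda]\leq \exp\!\bigl(-\lambda^2/(2\sum_{k=1}^t c_k^2)\bigr)$. Applying the identical argument to the martingale $-X_0,\ldots,-X_n$ (which has the same increment bounds) produces the matching lower-tail estimate, and adding the two tails via a union bound yields the claimed prefactor of $2$. Apart from the conditional MGF estimate, the remaining steps are routine telescoping and a single-variable optimization.
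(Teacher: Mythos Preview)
Your proof is correct and is the standard exponential-moment argument for Azuma--Hoeffding; the auxiliary MGF bound, the telescoping over increments, the optimization in $s$, and the union bound for the two-sided estimate are all carried out properly. Note, however, that the paper does not actually prove this theorem: it is stated in the appendix purely as a cited result from \cite{MU17}, so there is no ``paper's own proof'' to compare against. Your write-up simply supplies what the paper omits.
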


\end{document}